\documentclass[10pt,journal]{IEEEtran}

\hyphenation{lists}

\makeatletter
\def\ifundefined{\@ifundefined}
\makeatother


\newcommand{\chapter}{}

\usepackage{epsfig}
\usepackage{amsmath}
\usepackage{amssymb} 
\usepackage{graphicx}
\usepackage{amsthm}
\usepackage{mathrsfs}
\usepackage{bbold}
\usepackage{multirow}
\usepackage{rotating}
\usepackage{cite}
\usepackage{ifthen}

\DeclareMathOperator{\sgn}{sgn}
\DeclareMathOperator{\const}{const}

\newboolean{dcol}
\setboolean{dcol}{true}
\newcommand{\bitm}{\begin{itemize}}
\newcommand{\eitm}{\end{itemize}}
\newcommand{\be}{\begin{equation}}
\newcommand{\ee}{\end{equation}}
\newcommand{\bea}{\begin{eqnarray}}
\newcommand{\eea}{\end{eqnarray}}
\newcommand\ba{\renewcommand{\arraystretch}{.7} \left[ \begin{array}{*{6}{@{\hspace{2pt}}r@{\hspace{1.5pt}}}}}
\def\ea{\end{array}\right]}
\def\nn{\nonumber\\}
\newcommand{\bfi}{\begin{figure}}
\newcommand{\efi}{\end{figure}}

\newcommand{\mat}[1]{\begin{bf} #1 \end{bf}}
\newcommand{\fn}{}
\newcommand{\I}[1]{\mathbb{1}\left\{#1 \right\}}
\newcommand{\Norm}[2]{|| #1 ||_{{#2}}}
\newcommand{\Ind}[1]{\mathbb{1} \left\{ #1\right\}}
\newcommand{\U}{U}

\newtheorem{thm}{Theorem}  

\newtheorem{pro}{Proposition}

\newenvironment{thmN}[2][Theorem]{\begin{trivlist} 
\item[\hskip \labelsep {\bfseries #1}\hskip \labelsep {\bfseries #2}] \it }{\end{trivlist}}

\newtheorem{defn}{Definition}       
\newtheorem{rem}{Remark}       





\ifthenelse{\boolean{dcol}}{

\addtolength{\abovedisplayskip}{-1pt}
\addtolength{\belowdisplayskip}{-1pt}

}


\begin{document}

\title{On U-Statistics and Compressed Sensing I: Non-Asymptotic Average-Case Analysis}

\author{
\authorblockN{Fabian Lim${}^*$ and Vladimir Marko Stojanovic}\\
\ifthenelse{\boolean{dcol}}{
\thanks{F. Lim and V. M. Stojanovic are with the Research Laboratory of Electronics, 
Massachusetts Institute of Technology, 77 Massachusetts Avenue, Cambridge, MA 02139.
Email: \{flim,vlada\}@mit.edu.
This work was supported by NSF grant ECCS-1128226. \newline
\indent Part of this work will be presented at the 2012 IEEE International Conference on Communications (ICC), Ottawa, Canada.} 
}{
\authorblockA{Research Laboratory of Electronics\\ Massachusetts Institute of Technology,
77 Massachusetts Avenue, Cambridge, MA 02139\\ 
\{flim,vlada\}@mit.edu 
\thanks{This work was supported by NSF grant ECCS-1128226. 
Part of this work will be presented at the 2012 IEEE International Conference on Communications (ICC), Ottawa, Canada.} 
}}
}
\maketitle

\begin{abstract} 

Hoeffding's U-statistics model combinatorial-type matrix parameters (appearing in CS theory) in a natural way. 
This paper proposes using these statistics for analyzing random compressed sensing matrices, in the non-asymptotic regime (relevant to practice).
The aim is to address certain pessimisms of ``worst-case'' \emph{restricted isometry} analyses, as observed by both Blanchard \& Dossal,~et.~al.

We show how U-statistics can obtain ``average-case'' analyses, by relating to \emph{statistical restricted isometry property} (StRIP) type recovery guarantees.
However unlike standard StRIP, random signal models are not required; the analysis here holds in the \emph{almost sure} (probabilistic) sense.
For Gaussian/bounded entry matrices, 
we show that both $\ell_1$-minimization and LASSO essentially require on the order of $k \cdot [\log((n-k)/u) + \sqrt{2(k/n) \log(n/k)}]$ measurements
to respectively recover at least $1-5u$ fraction, and $1-4u$ fraction, of the signals. 
Noisy conditions are considered.
Empirical evidence suggests our analysis to compare well to Donoho \& Tanner's recent large deviation bounds for $\ell_0/\ell_1$-equivalence, in the regime of block lengths $1000 \sim 3000$ with high undersampling ($50\sim 150$ measurements); similar system sizes are found in recent CS implementation.

In this work, it is assumed throughout that matrix columns are independently sampled. 


\end{abstract}


\begin{IEEEkeywords}
approximation, compressed sensing, satistics, random matrices
\end{IEEEkeywords}

\ifthenelse{\boolean{dcol}}{
\markboth{Lim and Stojanovic: On U-Statistics and Compressed Sensing I: Non-Asymptotic Average-Case Analysis}{}
}{
\markboth{IEEE Transactions on Signal Processing, Lim and Stojanovic, Prepared using \LaTeX}{}
}


\section{Introduction}


Compressed sensing (CS) analysis involves relatively recent results from random matrix theory~\cite{Candes}, whereby recovery guarantees are framed in the context of matrix parameters known as \emph{restricted isometry constants}. 
Other matrix parameters are also often studied in CS. 
Earlier work on sparse approximation considered a matrix parameter known as \emph{mutual coherence}~\cite{Grib,Elad,Tropp}. 
Fuchs' work on \emph{Karush-Kuhn-Tucker (KKT)} conditions for sparsity pattern recovery considered a parameter involving a matrix \emph{pseudoinverse}~\cite{Fuchs}, re-occurring in recent work~\cite{Tropp,Cand2008,Barga}.
Finally, the \emph{null-space property}~\cite{Zhang,Cohen2008,DAspremont2009} is gaining recent popularity - being the parameter closest related to the fundamental compression limit dictated by \emph{Gel'fand widths}.
All above parameters share a similar feature, that is they are defined over subsets of a certain fixed size $k$.
This combinatorial nature makes them difficult to evaluate, even for moderate block lengths $n$. Most CS work therefore involve some form of randomization to help the analysis. 

While the celebrated $k \log (n/k)$ result was initially approached via asymptotics,
\textit{e.g.},~\cite{Candes,Donoho,RIPsimp,Blanchard}, implementations require finite block sizes.
Hence, non-asymptotic analyses are more application relevant.
In the same practical aspect, recent work deals with non-asymptotic analysis of \emph{deterministic} CS matrices, see~\cite{Calder,Gan,Tropp,Barga}.
On the other hand certain situations may not allow control over the sampling process, whereby the sampling may be inherently random, \textit{e.g.}, prediction of clinical outcomes of various tumors based on gene expressions~\cite{Cand2008}.
Random sampling has certain desirable simplicity/efficiency features - 
see~\cite{Sartipi2011} on data acquisition in the distributed sensor setting.
Also recent hardware implementations point out energy/complexity-cost benefits of implementing \emph{pseudo-random binary sequences}~\cite{Fred,Mishali,EPFL}; these sequences mimic statistical behavior. 
Non-asymptotic analysis is particularly valuable, when random samples are costly to acquire. For example, each clinical trial could be expensive to conduct an excessive number of times. 
In the systems setting, the application could be running on a tight energy budget - whereby processing/communication costs depend on the number of samples acquired.

This work is inspired by the \emph{statistical} notion of the restricted isometry property (StRIP), initially developed for deterministic CS analysis~\cite{Calder,Gan}. The idea is to relax the analysis, by allowing sampling matrix parameters (that guarantee signal recovery) to be satisfied for a \emph{fraction} of subsets. 
Our interest is in ``average-case'' notions in the context of randomized sampling, reason being that certain pessimisms of 
``worst-case'' restricted isometry analyses have been observed in past works~\cite{Blanchard,Dossal2009,Bah}. 
On the other hand in~\cite{DT}, Donoho \& Tanner remarked on potential benefits of the above ``average-case'' notion, recently pursued in an adaptation of a previous asymptotic result~\cite{DTExp}. 
In the multichannel setting, ``average-case'' notions are employed to make analysis more tractable~\cite{F2007,Eldar2009}. 
In~\cite{Epfl2008}~a simple ``thresholding'' algorithm is analyzed via an ``average'' coherence parameter.
However the works in this respect are few, most random analyses are of the ``worst-case'' type, see~\cite{Rudel,RIPsimp,Blanchard,Bah}. 
We investigate the unexplored, with the aim of providing new insights and obtaining new/improved results for the ``average-case''.

Here we consider a random analysis tool that is well-suited to the CS context, yet seemingly left untouched in the literature.
Our approach differs from that of deterministic matrices, where ``average-case'' analysis is typically made accessible via mutual coherence, see~\cite{Calder,Gan,Mishali}. 
For random matrices, we propose an alternative approach via \emph{U-statistics}, which do not require random signal models typically introduced in StRIP analysis, see~\cite{Calder,Epfl2008,Eldar2009}; here, the results are stated in the \emph{almost sure} sense.
U-statistics apply naturally to various kinds of non-asymptotic CS analyses, since they are designed for combinatorial-type parameters.
Also, they have a natural ``average-case''  interpretation, which we apply to recent recovery guarantees 
that share the same ``average-case'' characteristic. 
Finally thanks to the wealth of U-statistical literature, the theory developed here is open to other extensions, \textit{e.g.}, in related work~\cite{Lim2} we demonstrate how U-statistics may also perform ``worst-case'' analysis.


\textbf{Contributions}:  
``Average-case'' analyses are developed based on U-statistics, which are i) empirically observed to have good potential for predicting CS recovery in non-asymptotic regimes, and ii) theoretically obtain measurement rates that incorporate a non-zero failure rate (similar to the $k\log (n/k)$ rate from ``worst-case'' analyses).
We utilize a
U-statistical large deviation concentration theorem, 
under the assumption that the matrix columns are independently sampled.
The large deviation error bound holds \emph{almost surely} (Theorem \ref{thm:Ustat}). No random signal model is needed, and the error is of the order $(n/k)^{-1} \log (n/k)$, whereby $k$ is the U-statistic kernel size (and $k$ also equals sparsity level).
Gaussian/bounded entry matrices are considered.
For concreteness, we connect with StRIP-type guarantees (from~\cite{Barga,Cand2008}) to study the fraction of recoverable signals (\textit{i.e.}, ``average-case'' recovery) of: i) \emph{$\ell_1$-minimization} and ii) \emph{least absolute shrinkage and selection operator (LASSO)}, under noisy conditions.
For both these algorithms we show $\const\cdot k  [\log((n-k)/u) + \sqrt{2(k/n) \log(n/k)}]$ 
measurements are essentially required, to respectively recover at least $1-5u$ fraction (Theorem \ref{thm:L1}), and $1-4u$ fraction (Theorem \ref{thm:Lasso}), of possible signals.
This is improved to $1-3u$ fraction for the noiseless case.
Here $\const =  \max(4/(a_1 a_2)^2, 2c_1/(0.29-a_1)^2)$ for to be specified constants $a_1,a_2,c_1$, where $c_1$ depends on the distribution of matrix entries.
Note that the term $\sqrt{2(k/n) \log(n/k)}$ is at most 1 and vanishes with small $k/n$.
Empirical evidence suggests that our approach compares well with recent results from Donoho \& Tanner~\cite{DTExp} -  improvement is suggested for system sizes found in implementations~\cite{Fred}, with large undersampling (\textit{i.e.}, $m = 50 \sim 100$ and $n = 1000 \sim 3000$).
The large deviation analysis here does show some pessimism in the size of $\const$ above, whereby $\const \geq 4$ (we conjecture possible improvement).
For Gaussian/Bernoulli matrices, we find $\const \approx 1.8$ to be inherently smaller, \textit{e.g.}, for $k=4$ this predicts recovery of $1\times 10^{-6}$ fraction with $153$ measurements - empirically $m= 150$.


\textbf{Note}:  
StRIP-type guarantees~\cite{Barga,Cand2008} seem to work well, by simply \emph{not} placing restrictive conditions on the maximum eigenvalues of the size-$k$ submatrices. 
Our theory applies fairly well for various considered system sizes $k,m,n$ (\textit{e.g.}, Figure \ref{fig:CompL1}), however in \emph{noisy} situations, a (relatively small) factor of $\sqrt{k}$ losses is seen without making certain maximum eigenvalue assumptions. 
For $\ell_1$-recovery, the estimation error is now bounded by a $\sqrt{k}$ factor of its best $k$-term approximation error (both errors measured using the $\ell_1$-norm). 
For LASSO, the 
the non-zero signal magnitudes must now be bounded below by a factor $\sqrt{2 k \log n}$ (with respect to noise standard deviation), as opposed to $\sqrt{2 \log n}$ in~\cite{Cand2008}.
These losses occur not because of StRIP analyses, but because of the estimation techniques employed here.

\textbf{Organization}:  
We begin with relevant background on CS in Section \ref{sect:params}. In Section \ref{sect:Dist} we present a general U-statistical theorem for large-deviation (``average-case'') behavior.
In Section \ref{sect:KKT} the U-statistical machinery is applied to StRIP-type ``average-case'' recovery.
We conclude in Section \ref{sect:conc}.



\newcommand{\Real}{\mathbb{R}}

\newcommand{\eigm}{\sigma^2_{\scriptsize \mbox{\upshape min}}}
\newcommand{\eigM}{\sigma^2_{\scriptsize \mbox{\upshape max}}}
\newcommand{\sigm}{\sigma_{\scriptsize \mbox{\upshape min}}}
\newcommand{\sigM}{\sigma_{\scriptsize \mbox{\upshape max}}}
\newcommand{\Tr}{ \mbox{\upshape Tr}}

\newcommand{\Eigm}{\varsigma_{\scriptsize \mbox{\upshape min}}}
\newcommand{\EigM}{\varsigma_{\scriptsize \mbox{\upshape max}}}

\textbf{Notation}: 
The set of real numbers is denoted $\Real$. 
Deterministic quantities are denoted using $a,\mat{a}$, or $\mat{A}$, where bold fonts denote vectors (\textit{i.e.}, $\mat{a}$) or matrices (\textit{i.e.}, $\mat{A}$). 
Random quantities are denoted using upper-case \emph{italics}, where $A$ is a random variable (RV), and $\pmb{A}$ a random vector/matrix.
Let $\Pr\{A\leq a\}$ denote the probability that event $\{A\leq a\}$ occurs.
Sets are denoted using braces, \textit{e.g.}, $\{1,2,\cdots\}$. 
The notation $\mathbb{E}$ denotes expectation.
The notation $i,j,\ell,\omega$ is used for indexing. 
We let $\Norm{\cdot}{p}$ denote the $\ell_p$-norm for $p=1$ and $2$. 

\newcommand{\x}{\alpha}
\newcommand{\xbk}{\overline{\pmb{\alpha}}_{k}}

\section{Preliminaries}  \label{sect:params}
\subsection{Compressed Sensing (CS) Theory} \label{ssect:params}

\newcommand{\Sens}{\pmb{\Phi}}
\renewcommand{\S}{\mathcal{S}}
\newcommand{\Fm}{F_{\sigma^2_{\scriptsize \mbox{\upshape min}}}}
\newcommand{\FM}{F_{\sigma^2_{\scriptsize \mbox{\upshape max}}}}
\renewcommand{\a}{a}
\newcommand{\matt}[1]{\pmb{#1}}
\newcommand{\Bas}{\mat{D}}

A vector $\mat{a}$ is said to be $k$-sparse, if at most $k$ vector coefficients are non-zero (\emph{i.e.}, its $\ell_0$-distance satisfies $\Norm{\mat{\a}}{0} \leq k$). 
Let $n$ be a positive integer that denotes {block length}, and let $\matt{\x}=[\x_1,\x_2,\cdots, \x_n]^T$ denote a length-$n$ signal vector with signal coefficients $\x_i$. 
The \emph{best $k$-term approximation} $\xbk $ of $\matt{\x}$, is obtained by finding the $k$-sparse vector $\xbk $ that has minimal approximation error $\Norm{\xbk - \matt{\x}}{2}$.

\newcommand{\col}{\pmb{\phi}}
\newcommand{\y}{b}
\newcommand{\eps}{\epsilon}


Let $\Sens$ denote an $m\times n$ CS sampling matrix, where $m <  n$. 
The length-$m$ \textbf{measurement vector} denoted $\mat{\y}=[\y_1,\y_2,\cdots, \y_m]^T $ of some length-$n$ signal $\matt{\x}$, is formed as $\mat{\y} =\Sens \matt{\x}$.
Recovering $\matt{\x}$ from $\mat{\y}$ is challenging as $\Sens$ possesses a \emph{non-trivial null-space}. 
We typically recover $\matt{\x}$ by solving the (convex) $\ell_1$-\textbf{minimization} problem
\bea
	  \min_{\tilde{\matt{\x}} \in \Real^n} \Norm{\tilde{\matt{\x}}}{1}~~~\mbox{ s. t.   } \Norm{\tilde{\mat{\y}} - \Sens \tilde{\matt{\x}}}{2} \leq \eps. \label{eqn:L1}
\eea
The vector $\tilde{\mat{\y}}$ is a \emph{noisy} version of the original measurements $\mat{\y}$, and here $\eps$ bounds the noise error, \emph{i.e.}, $\eps \geq \Norm{\tilde{\mat{\y}}-\mat{\y}}{2}$. 
Recovery conditions have been considered in many flavors~\cite{Donoho,DTExp,Grib,Elad,DT}, and mostly rely on studying parameters of the sampling matrix $\Sens$. 


\newcommand{\RIC}{\delta}

For $k \leq n$, the $k$-th \textbf{restricted isometry constant} $\RIC_k$ of an $m\times n$ matrix $\Sens$, equals the smallest constant that satisfies
\bea
	(1-\RIC_k) \Norm{\matt{\x}}{2}^2 \leq \Norm{\Sens\matt{\x}}{2}^2 \leq (1+\RIC_k) \Norm{\matt{\x}}{2}^2, \label{eqn:RIC}
\eea
for any $k$-sparse $\matt{\x} \mbox{ in } \Real^n$.
The following well-known recovery guarantee is stated w.r.t. $\RIC_k$ in (\ref{eqn:RIC}).


\newcommand{\ConstZero}{c_1}
\newcommand{\ConstOne}{c_2}
\newcommand{\half}{\frac{1}{2}}
\newcommand{\RIPthm}{A}
\begin{thmN}{\RIPthm, \textit{c.f.},~\cite{RIP}} 
Let $\Sens$ be the sensing matrix.
Let $\matt{\x}$ denote the signal vector. 
Let $\mat{\y}$ be the measurements, \emph{i.e.}, $\mat{\y} = \Sens\matt{\x}$. 
Assume that the $(2k)$-th restricted isometry constant $\RIC_{2k}$ of $\Sens$ satisfies $\RIC_{2k} < \sqrt{2} -1$, and further assume that the noisy version $\tilde{\mat{\y}}$ of $\mat{\y}$ satisfies $ \Norm{\tilde{\mat{\y}}-\mat{\y}}{2} \leq \epsilon$. Let $\xbk$ denote the best-$k$ approximation to $\matt{\x}$. Then the $\ell_1$-minimum solution $\matt{\x}^*$ to (\ref{eqn:L1}) satisfies 
\[
	 \Norm{\matt{\x}^* - \matt{\x}}{1} \leq \ConstZero \Norm{\matt{\x} - \xbk}{1} + \ConstOne \epsilon,
\]
for small constants $\ConstZero = 4\sqrt{1+\RIC_{2k}}/(1-\RIC_{2k}(1+\sqrt{2}))$ and $\ConstOne = 2 (\RIC_{2k} (1-\sqrt{2}) - 1)/(\RIC_{2k} (1+\sqrt{2}) - 1) $.
\end{thmN}

\newcommand{\Gram}{\Sens^T\Sens}

\newcommand{\ord}[1]{{(#1)}}
\newcommand{\GramS}{\Sens^T_\S\Sens_\S}
\newcommand{\SensS}{\Sens_\S}
\newcommand{\SensSi}{\Sens_{\S_i}}
\newcommand{\GramSi}{\Sens^T_{\S_i}\Sens_{\S_i}}
\newcommand{\GramSoi}[1]{\Sens_{\S_\ord{#1}}}
\newcommand{\GrampSoi}[1]{\Sens'_{\S_\ord{#1}}}
\newcommand{\N}{N}
\newcommand{\Bin}[2]{{#1 \choose #2}}
\newcommand{\define}{\stackrel{\Delta}{=}}
\renewcommand{\fn}{\footnote{We aim to relax this fairly restrictive assumption in future work.}}



\renewcommand{\fn}{\footnote{For simplicity, we omitted small deviation constants in Theorem B, see~\cite{Candes2004} p. 18 for details.}}

\newcommand{\A}{\pmb{A}}
Theorem \RIPthm~is very powerful, on condition that we know the constants $\RIC_k$. But because of their combinatoric nature, computing the restricted isometry constants $\RIC_k$ is NP-Hard~\cite{Blanchard}.
Let $\S$ denote a size-$k$ subset of indices. 
Let $\SensS$ denote the size $m\times k$ submatrix of $\Sens $, indexed on (column indices) in $\S$.  
Let $\eigM(\SensS)$ and $\eigm(\SensS)$ respectively denote the minimum and  maximum, \emph{squared-singular values} of $\SensS$. Then from (\ref{eqn:RIC}) if the columns $\col_i$ of $\Sens$ are properly normalized, \emph{i.e.}, if $\Norm{\col_i}{2}=1$, we deduce that $\RIC_{k}$ is the smallest constant in $\Real$ that satisfies 
\bea 
	\RIC_k &\geq& \max(\eigM(\Sens_\S) - 1, 1-\eigm(\Sens_\S)),
	\label{eqn:RIC2}
\eea
for all $\Bin{n}{k}$ size-$k$ subsets $\S$.
For large $n$, the number $\Bin{n}{k}$ is huge. 
Fortunately $\RIC_k$ need not be explicitly computed, if we can estimate it after incorporating \emph{randomization}~\cite{Candes,Donoho}. 

\newcommand{\func}{\zeta}
\newcommand{\1}{\mathbb{1}}
\newcommand{\LedThm}{B}

Recovery guarantee Theorem \RIPthm~involves ``worst-case'' analysis.
If the inequality (\ref{eqn:RIC2}) is violated for \emph{any} one submatrix $\Sens_\S$, then the \emph{whole} matrix $\Sens$ is deemed to have restricted isometry constant larger than $\RIC_k$. 
A common complaint of such ``worst-case'' analyses is pessimism, \textit{e.g.}, in~\cite{Dossal2009} it is found that for $n=4000$ and $m=1000$, the restricted isometry property is not even satisfied for sparsity $k=5$. 
This motivates the ``average-case'' analysis investigated here, where the recovery guarantee is relaxed to hold for a large ``fraction'' of signals (useful in applications that do not demand all possible signals to be completely recovered).
We draw ideas from the statistical StRIP notion used in deterministic CS, which only require ``most'' of the submatrices $\Sens_\S$ to satisfy some properties. 

In statistics, a well-known notion of a U-statistic (introduced in the next subsection) is very similar to StRIP. 
We will show how U-statistics naturally lead to ``average-case'' analysis.

%


\subsection{U-statistics \& StRIP} \label{ssect:UstatStr}

\newcommand{\E}{\mathbb{E}}
\newcommand{\g}{g}
\newcommand{\ZO}{\Real_{[0,1]}}

A function $\func : \Real^{m \times k} \rightarrow \Real$ is said to be a \textbf{kernel}, 
if for any $\mat{A},\mat{A}' \in \Real^{m \times k}$, we have $\func(\mat{A}) = \func(\mat{A}')$ if matrix ${\mat{A}}'$ can be obtained from $\mat{A}$ by \emph{column reordering}.
Let $\ZO$ be the set of real numbers bounded below by $0$ and above by $1$, i.e., $\ZO = \{a \in \Real : 0 \leq a \leq 1\}$.
U-statistics are associated with functions $g : \Real^{m \times k} \times \Real \rightarrow \ZO$ known as \textbf{bounded kernels}.
To obtain bounded kernels $g$ from indicator functions, simply use some kernel $\func$ and set 
$g(\mat{A},a) = \Ind{\func(\mat{A})\leq a}$ or $g(\mat{A},a) = \Ind{\func(\mat{A}) > a}$, \textit{e.g.} $\1\{\sigM^2(\mat{A}) \leq a\}$. 

\newcommand{\V}{U}
\begin{defn}[Bounded Kernel U-Statistics] \label{def:Ustat}
Let $\A$ be a random matrix with $n$ columns. Let $\Sens$ be sampled as $\Sens=\A$. 
Let $g : \Real^{m \times k} \times \Real \mapsto \ZO$ be a bounded kernel. For any $a\in \Real$, the following quantity
\bea
	 \U_n(a) \define \frac{1}{\Bin{n}{k}} \sum_{\S} g(\Sens_{\S}, a) \label{eqn:Ustat}
\eea
is a U-statistic of the sampled realization $\Sens=\A$, corresponding to the kernel $g$.
In 
(\ref{eqn:Ustat}), the matrix $\SensS$ is the submatrix of $\Sens$ indexed on column indices in $\S$, and the sum takes place over all subsets $\S$ in $\{1,2,\cdots, n\}$. 
Note, $0 \leq U_n(a) \leq 1$.
\end{defn}

For $k\leq n$ and positive $u$ where $u\leq 1$, a matrix $\Sens$ has $u$-\textbf{StRIP constant} $\RIC_k$, if $\RIC_k$ is the smallest constant s.t.
\bea
	(1-\RIC_k) \Norm{\matt{\x}}{2}^2 \leq \Norm{\Sens_\S\matt{\x}}{2}^2 \leq (1+\RIC_k) \Norm{\matt{\x}}{2}^2, 	\label{eqn:SRIC}
\eea
for any $\matt{\x}\in\Real^k $ and fraction $u$ of size-$k$ subsets $\S$. 
The difference between (\ref{eqn:SRIC}) and (\ref{eqn:RIC}) is that $\Sens_S$ is in place of $\Sens$.
This StRIP notion coincides with~\cite{Barga}.
Consider $\func(\mat{A}) = \max(\eigM(\mat{A}) -1,1 -\eigm(\mat{A})  )$ where here $\func$ is a kernel. 
Obtain a bounded kernel $g$ by setting $g(\mat{A},a) = \1\{\func(\mat{A}) > a \}$. 
Construct a U-statistic $\U_n(\RIC)$ of $\Sens$ the form $\U_n(\RIC)= \Bin{n}{k}^{-1}\sum_\S \1\{\func(\SensS) > \RIC \}$. 
Then if this U-statistic satisfies $U_n(\RIC)=1-u$, the $u$-StRIP constant $\RIC_k$ of $\Sens$ is at most $\RIC$, \emph{i.e.}, $\RIC_k \leq \RIC$.


\renewcommand{\fn}{\footnote{If $\mat{A}$ has full column rank, then $\mat{A}^\dagger = (\mat{A}^T\mat{A})^{-1} \mat{A}^T$,}}
To exploit apparent similarities between U-statistics and StRIP, we turn to two ``average-case'' guarantees found in the StRIP literature.
In the sequel, the conditions required by these two guarantees, 
will be analyzed in detail via U-statistics  - for now let us recap these guarantees.
First, an $\ell_1$-minimization recovery guarantee recently given in~\cite{Barga}, is a StRIP-adapted version of the ``worst-case'' guarantee Theorem \RIPthm.
For any non-square matrix $\mat{A}$, let $\mat{A}^\dagger$ denote the \textbf{Moore-Penrose pseudoinverse}\fn.
A vector $ \matt{\beta}$ with entries in $\{-1,1\}$ is termed a \textbf{sign vector}.
For $\matt{\x}\in \Real^n$, we write $\matt{\x}_\S$ for the length-$k$ vector supported on $\S$.
Let $\S_c$ denote the complementary set of $\S$, \emph{i.e.}, $\S_c = \{1,2,\cdots, n\}\setminus \S$.
The ``average-case'' guarantees require us to check conditions on $\Sens$ for fractions of subsets $\S$, or \textbf{sign-subset} pairs $(\matt{\beta},\S)$.

\newcommand{\BargThm}{B}
\begin{thmN}{\BargThm,~\textit{c.f.},~Lemma 3,~\cite{Barga}}
Let $\Sens$ be an $m\times n$ sensing matrix. Let $\S$ be a size-$k$ subset, and let $ \matt{\beta} \in \{-1,1\}^k$. Assume that $\Sens$ satisfies 
\bitm
\item invertibility: for at least a fraction $1-u_1$ of subsets $\S$, the condition $\sigm(\SensS) > 0$ holds. 
\item small projections: for at least a fraction $1-u_2$ of sign-subset pairs $(\matt{\beta},\S)$, the condition 
\bea
	 \left|(\Sens_\S^\dagger \col_i)^T \matt{\beta} \right| \leq a_2  \mbox{ for every } i \notin \S\nonumber
\eea
holds where we assume the constant $a_2 < 1$. 
\item worst-case projections: for at least a fraction $1-u_3$ of subsets $\S$, the following condition holds
\bea
	 \Norm{\Sens_\S^\dagger \col_i}{1} \leq a_3  \mbox{ for every } i \notin \S. \nonumber
\eea
\eitm
Then for a fraction $1-u_1-u_2-u_3$ of sign-subset pairs $(\matt{\beta},\S)$, the following error bounds are satisfied
\bea
	 \Norm{\matt{\x}^*_\S - \matt{\x}_\S}{1} &\leq& \frac{2a_3}{1-a_2} \Norm{\matt{\x} - \xbk}{1}, \nn
	 \Norm{\matt{\x}^*_{\S_c} - \matt{\x}_{\S_c}}{1} &\leq& \frac{2 }{1-a_2} \Norm{\matt{\x} - \xbk}{1}, \nonumber
\eea
where $\matt{\x}$ is a signal vector that satisfies $\sgn(\matt{\x}_\S)= \matt{\beta}$, and $\xbk$ is the best-$k$ approximation  of $\matt{\x}$ and $\xbk$ is supported on $\S$, and finally $\matt{\x}^*$ is the solution to (\ref{eqn:L1}) where the measurements $\mat{\y}$ satisfy $\mat{\y} = \Sens\matt{\x}$. 
\end{thmN}

\newcommand{\Reg}{\theta_n}
\newcommand{\noisesd}{c_Z}


For convenience, the proof is provided in Supplementary Material \ref{app:recov}.
The second guarantee is a StRIP-type recovery guarantee for the \emph{LASSO} estimate, based on~\cite{Cand2008} (also see~\cite{Barga}). 
Consider recovery from noisy measurements
\[
	\tilde{\mat{\y}} = \Sens \matt{\x}+ \mat{z},
\]
here $\mat{z}$ is a length-$m$ noise realization vector. We assume that the entries $z_i$ of $\mat{z}$, are sampled from a zero-mean Gaussian distribution with variance $\noisesd^2$.
The LASSO estimate considered in~\cite{Cand2008}, is the optimal solution $\matt{\x}^*$ of the optimization problem
\bea
	  \min_{\tilde{\matt{\x}} \in \Real^n}  \frac{1}{2} \Norm{\tilde{\mat{\y}} - \Sens \tilde{\matt{\x}}}{2}  +  2 \noisesd \cdot\Reg  \Norm{\tilde{\matt{\x}}}{1}. \label{eqn:Lasso}
\eea
The $\ell_1$-regularization parameter is chosen as a \emph{product} of two terms $\noisesd$ and $ \Reg$, where we specify $\Reg=(1+a)\sqrt{2 \log n }$ for some positive $a$.
What differs from convention is that the regularization depends on the noise standard deviation $\noisesd$.
We assume $\noisesd > 0$, otherwise there will be no $\ell_1$-regularization.

\newcommand{\CandThm}{C}
\begin{thmN}{\CandThm, \textit{c.f.},~\cite{Cand2008}}
Let $\Sens$ be the $m\times n$ sensing matrix. Let $\S$ be a size-$k$ subset, and let $ \matt{\beta} \in \{-1,1\}^k$. 
\bitm
\item invertability: for at least a fraction $1-u_1$ of subsets $\S$, the condition $\sigm(\SensS) > {a_1}$ holds. 
\item small projections: for at least a fraction $1-u_2$ of subsets $\S$, same as Theorem \BargThm.
\item invertability projections: for at least a fraction $1-u_3$ of sign-subset pairs $(\matt{\beta},\S)$, the following condition holds
\bea
	 \Norm{(\SensS^T\SensS)^{-1}\matt{\beta}}{\infty} \leq a_3. \nonumber
\eea
\eitm
Let $\noisesd$ denote noise standard deviation.
Assume Gaussian noise realization $\mat{z}$ in measurements $\mat{\tilde{\y}}$, satisfy 
\bitm
\item[i)] $\Norm{(\Sens_\S^T \Sens_\S)^{-1}\Sens_\S^T \mat{z}}{\infty} \leq (\noisesd \sqrt{2\log n})/a_1$, for the constant $a_1$ in the invertability condition.
\item[ii)] $\Norm{\Sens_{\S_c}^T (\mat{I} - \Sens_\S \Sens_\S^\dagger)\mat{z}}{\infty} \leq \noisesd 2\sqrt{\log n}$, where $\S_c$ is the complementary set of $\S$.
\eitm
For some positive $a$, assume that constant $a_2$ in the small projections condition, satisfies
\bea
(\sqrt{2}(1+a))^{-1} + a_2 < 1.\label{eqn:Cand1}
\eea
Then for a fraction $1-u_1-u_2-u_3$ of sign-subset pairs $(\matt{\beta},\S)$, the LASSO estimate $\matt{\x}^*$ from (\ref{eqn:Lasso}) with regularization $\Reg=(1+a)\sqrt{2 \log n }$ for the same $a$ above, will successfully recover both signs and supports of $\matt{\x}$, if
\bea
   |\alpha_i| \geq  \left[a_1^{-1}  + 2 a_3 (1+a) \right] \cdot  \noisesd\sqrt{2 \log n}~\mbox{ for all }~i \in \S \label{eqn:Cand2}
\eea
\end{thmN}

Because of some differences from~\cite{Cand2008}, we also provide the proof in Supplementary Material \ref{app:recov}.
In~\cite{Cand2008} it is shown that the noise conditions i) and ii) are satisfied with large probability at least $1 - n^{-1} (2\pi \log n)^{-\half}$ (see Proposition \ref{pro:noise} in Supplementary Material \ref{app:recov}).
Theorem \CandThm~is often referred to as a \emph{sparsity pattern recovery} result, in the sense that it guarantees recovery of the sign-subset pairs $(\matt{\beta}, \S)$ belonging to a $k$-sparse signal $\matt{\x}$. 
Fuchs established some of the earlier important results, see~\cite{Fuchs,Tropp2,Fuchs2005}.   

In Theorems \BargThm~and \CandThm, 
observe that the \emph{invertability} condition can be easily checked using an U-statistic; simply set the bounded kernel $g$ as $g(\mat{A},a_1) = \I{\sigm(\mat{A}) \leq a_1}$ for some positive $a_1$ and measure the fraction $U_n(a_1) = u_1$. 
Other conditions require slightly different kernels, to be addressed in upcoming Section \ref{sect:KKT}. 
But first we first introduce the main U-statistical large deviations theorem (central to our analyses) in the next section.



\section{Large deviation theorem: ``average-case'' behavior} \label{sect:Dist}

Consider two bounded kernels $g$ defined for $\mat{A}\in \Real^{m\times k}$, corresponding to maximum and minimum squared singular values
\bea
g(\mat{A}, a) &=& \I{\eigM(\mat{A}) \leq a}, \mbox{ and } \label{eqn:geig_max} \\
g(\mat{A}, a) &=& \I{\eigm(\mat{A}) \leq a}.  \label{eqn:geig_min}
\eea
Note that restricted isometry conditions (\ref{eqn:RIC}) and (\ref{eqn:SRIC}) depend on both $\eigm$ and $\eigM$ behaviors, although
the conditions in the previous StRIP-recovery guarantees Theorem \BargThm~are explicitly imposed only on $\eigm$.
See~\cite{Blanchard,Edelman} for the different behaviors and implications of these two extremal eigenvalues.
In this section we consider two U-statistics, corresponding separately to (\ref{eqn:geig_max}) and (\ref{eqn:geig_min}).

\newcommand{\gauss}{\int_{-\infty}^a (1/\sqrt{2\pi}) e^{-\t^2/2}d\t}
\renewcommand{\t}{t}

\ifthenelse{\boolean{dcol}}{
\begin{figure}[!t]
	\centering
	  \epsfig{file={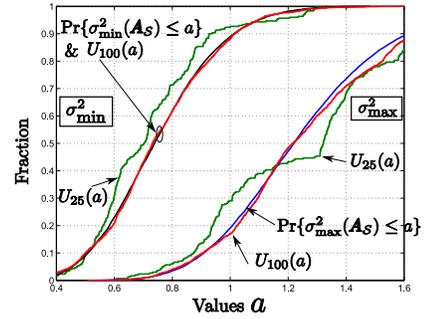},width=.6\linewidth}
		\caption{Gaussian measure. Concentration of U-statistic $\U_n(a)$ for squared singular value $\eigm$ and $\eigM$ kernels $g$, see (\ref{eqn:geig_max}) and (\ref{eqn:geig_min}). Shown for $m=25, k=2$ and two values of $n = 25$ and $100$.}
		\label{fig:Intro}
		\vspace*{-10pt}
\end{figure}
}{
\begin{figure}[!t]
	\centering
	  \epsfig{file={Intro.eps},width=.5\linewidth}
		\caption{Gaussian measure. Concentration of U-statistic $\U_n(a)$ for squared singular value $\eigm$ and $\eigM$ kernels $g$, see (\ref{eqn:geig_max}) and (\ref{eqn:geig_min}). Shown for $m=25, k=2$ and two values of $n = 25$ and $100$.}
		\label{fig:Intro}
		\vspace*{-10pt}
\end{figure}
}

Let $\A_i$ denote the $i$-th column of $\A$, and assume $\A_i$ to be IID. 
For an bounded kernel $g$, let $p(a)$ denote the expectation $\E g(\A_\S,a)$, \textit{i.e.}, $p(a) = \E g(\A_\S,a)$ for any size-$k$ subset $\S$. 
Since $p(a) = \E \U_n(a)$, thus the U-statistic mean $\E \U_n(a)$ does not depend on block length $n$.

\begin{thm} \label{thm:Ustat}
Let $\A$ be an $m\times n$ random matrix, whereby the columns $\A_i$ are IID. Let $g$ be a bounded bounded kernel that maps $\Real^{m\times k} \times \Real \rightarrow \ZO$ and let $p(a)=\E g(\A_\S,a) = \E \V_n(a)$. Let $\V_n(a)$ be a U-statistic of the sampled realization $\Sens=\A$ corresponding to the bounded kernel $g$. Then almost surely when $n$ is sufficiently large, the deviation $|\V_n(a) - p(a)| \leq \eps_n(a)$ is bounded by an error term $\eps_n(a)$ that satisfies 
\bea
	\eps_n^2(a) =  2 p(a) (1-p(a)) \cdot (n/k)^{-1} \log (n/k) . \label{eqn:UstatThm}
\eea
\end{thm}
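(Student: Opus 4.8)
The plan is to interpret $\V_n(a)$ as a classical Hoeffding U-statistic with a bounded kernel of order $k$, and to invoke a law-of-the-iterated-logarithm / strong-law-type concentration bound for U-statistics, tuned so that the almost-sure fluctuation has exactly the variance proxy $2p(a)(1-p(a))$ and the rate $(n/k)^{-1}\log(n/k)$. Concretely, fix $a$ and write $p = p(a)$. First I would record that $\E \V_n(a) = p$ (already observed in the text, since the columns $\A_i$ are IID and the kernel is symmetric under column reordering), and that each term $g(\Sens_\S, a)$ lies in $[0,1]$, so the kernel is bounded and $\V_n(a) \in [0,1]$. The natural quantity controlling the deviation is the variance of the first Hoeffding projection: let $g_1(\A_i, a) = \E[g(\A_\S, a)\mid \A_i]$, and set $\zeta_1 = \mathrm{Var}(g_1(\A_1,a))$. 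Hoeffding's classical variance formula gives $\mathrm{Var}(\V_n(a)) \le (k/n)\,\zeta_1 + O((k/n)^2)$, and since $g \in \{0,1\}$-valued indicators here (or at least $[0,1]$-valued with mean $p$), $\zeta_1 \le p(1-p)$, which is where the factor $2p(1-p)$ in \eqref{eqn:UstatThm} comes from (the $2$ being the iterated-logarithm-type constant).

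The second step is the almost-sure bound itself. I would appeal to a strong concentration result for U-statistics — either a Bernstein-type maximal inequality for U-statistics combined with Borel–Cantelli along a geometric subsequence of block lengths $n$, or directly a law of the iterated logarithm for U-statistics (e.g. the Serfling / Dehling–Denker–Philipp line of results). The key point is that for a bounded kernel, $\V_n(a) - p$ concentrates like a sum of $\lfloor n/k\rfloor$ IID bounded variables with variance $\le p(1-p)$; the effective sample size is $n/k$, not $n$, because the kernel has order $k$ (this is the standard "reduction to averages of blocks" argument, or Hoeffding's representation of a U-statistic as an average of averages of $\lfloor n/k\rfloor$ IID terms). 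Feeding variance $\le p(1-p)$ and sample size $n/k$ into an iterated-logarithm bound yields, almost surely for all sufficiently large $n$,
\[
 |\V_n(a) - p| \le \sqrt{2\,p(1-p)\,\frac{\log(n/k)}{n/k}}\,(1 + o(1)),
\]
which is exactly $\eps_n(a)$ as defined in \eqref{eqn:UstatThm} (the statement absorbs the $(1+o(1))$ into "when $n$ is sufficiently large"). Replacing $\log\log$ by $\log$ is a harmless weakening that also lets one avoid the sharpest LIL constants.

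The main obstacle is getting the almost-sure statement uniformly in the relevant sense and with the stated clean constant, rather than merely an in-probability or expectation bound. The cleanest route is probably: (i) use Hoeffding's averaging trick to write $\V_n(a)$ as an average over permutations of ordinary empirical means of $\lfloor n/k\rfloor$ IID $[0,1]$ variables; (ii) apply Hoeffding's / Bernstein's inequality to each such empirical mean with variance $\le p(1-p)$; (iii) take a union bound over a subsequence $n_j$ growing geometrically and apply Borel–Cantelli; (iv) interpolate between consecutive $n_j$ using monotonicity/Lipschitz control of $\V_n(a)$ in $n$ to fill the gaps, at the cost of a factor that tends to $1$. One must also be a little careful that the bound is stated for fixed $a$, so no empirical-process uniformity over $a$ is needed — this simplifies things considerably. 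A secondary subtlety is the degenerate case $p \in \{0,1\}$, where $\eps_n(a) = 0$ and indeed $\V_n(a) = p$ deterministically, so the claim is trivially true and can be dispatched separately.
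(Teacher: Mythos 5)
Your route (i)--(iii) is essentially the paper's own proof: Hoeffding's permutation-averaging representation of $U_n(a)$ as an average of empirical means of $\lfloor n/k\rfloor$ IID $[0,1]$-valued block terms $g(\pmb{A}_{\pi(\S_i)},a)$, a variance-sensitive Chernoff bound on each such mean (the paper optimizes Hoeffding's exponent and Taylor-expands it following Sen to extract the $2p(1-p)$ constant, exactly the Bernstein-type refinement you describe), and Borel--Cantelli. Two small corrections: the factor $p(1-p)$ arises as the variance of the \emph{full} order-$k$ kernel at the block level (a $[0,1]$-valued variable with mean $p$), not as the first-projection variance $\zeta_1$; and your step (iv) is both unnecessary --- the tail bounds of order $(n/k)^{-c/2}$ with $c>2$ are already summable over all $n$, so no geometric subsequence is needed --- and shaky as stated, since $U_n(a)$ is neither monotone nor Lipschitz in $n$ in any obvious sense.
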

Theorem \ref{thm:Ustat} is shown by piecing together (5.5) in~\cite{Hoef} and Lemma 2.1 in~\cite{Sen}. The proof is given in Appendix \ref{app:proofDev}. Figure \ref{fig:Intro} empirically illustrates this concentration result for $g$ in (\ref{eqn:geig_max}) and (\ref{eqn:geig_min}), 
corresponding to $p(a) = \E g(\A_\S,a) = \Pr\{\eigM(\A_\S) \leq a\}$ and $p(a) = \Pr\{\eigm(\A_\S) \leq a\}$. Empirical simulation of restricted isometries is very difficult, thus we chose small values $k=2$, $m=25$ and block lengths $n=25$ and $n=100$. For $n=25$ the deviation $|U_{25}(a) - p(a)|$ is very noticeable for all values of $a$ and both $\eigM$ and $\eigm$. However for larger $n=100$, the deviation $|U_{100}(a) - p(a)|$ clearly becomes much smaller. 
This is predicted by vanishing error $\eps_n(a)$ given in Theorem \ref{thm:Ustat}, which drops as the ratio $n/k$ increases.
In fact if $k$ is kept constant then the error behaves as $\mathcal{O}(n^{-1}\log n )$.

\ifthenelse{\boolean{dcol}}{ 
\begin{figure}[!t]
	\centering
	  \epsfig{file={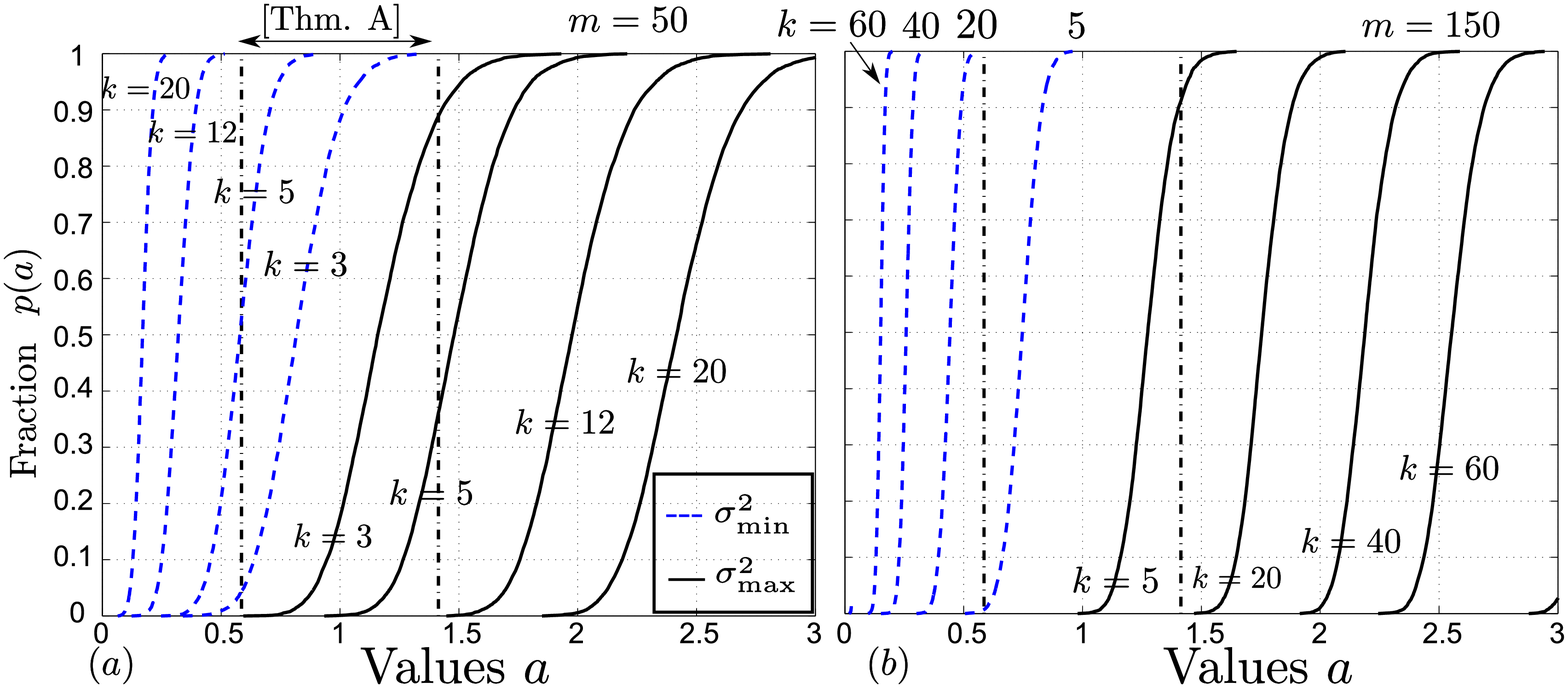},width=1\linewidth}
		\caption{Means $p(a) = \E \U_n(a)$ for predicting the concentration of $\U_n(a)$. Shown for the Gaussian case, $(a)$ $m=50$ and $(b)$ $m=150$.}
		\label{fig:Gauss}
\end{figure}
}{
\begin{figure}[!t]
	\centering
	  \epsfig{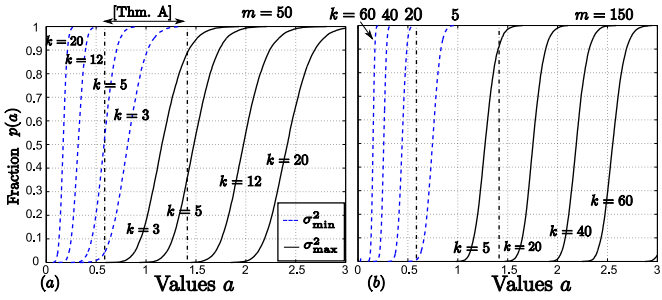}
		\caption{Means $p(a) = \E \U_n(a)$ for predicting the concentration of $\U_n(a)$. Shown for the Gaussian case, $(a)$ $m=50$ and $(b)$ $m=150$.}
		\label{fig:Gauss}
\end{figure}
}




\renewcommand{\fn}{\footnote{We point out that Bah actually defined two separate restricted isometry constants, each corresponding to $\eigm$ and $\eigM$ in~\cite{Bah}. In this paper to coincide the presentation with our discussion on squared singular values, their results will be discussed in the domain of $\eigm$ and $\eigM$.}}

\newcommand{\fnT}{\footnote{The analysis in~\cite{Bah} was performed for the large limit of $k,m$ and $n$, where both $k/m$ and $m/n$ approach fixed constants.}}


Table \ref{tab:1} reproduces\fn~a sample of (asymptotic) estimates for both $\eigM$ and $\eigm$ cases, taken from~\cite{Bah}. 
These estimates are derived for ``worst-case'' analysis,
under assumption that every entry $A_{ij}$ of $\pmb{A}$ is IID and Gaussian distributed (\textit{i.e.}, $A_{ij}$ is Gaussian with variance $1/m$). 
Table \ref{tab:1} presents the estimates according\fnT~to fixed ratios $k/m$ and $m/n$. 
To compare, Figure \ref{fig:Gauss} shows the expectations $p(a)=\E \U_n(a)$. 
The values $p(a)$ are interpreted as fractions, and as $n/k$ becomes large $p(a)$ is approached by $\U_n(a)$ within a stipulated error $\eps_n$. 
Figure \ref{fig:Gauss} is empirically obtained, though note that in Gaussian case
for $p(a)$ we also have exact expressions~\cite{Edelman,Koev}, and the \emph{Bartlett decomposition}~\cite{Edelman2005}, available.
Again $p(a)$ is a marginal quantity (\textit{i.e.} does not depend on $n$) 
and simulation is reasonably feasible. 
In the spirit of non-asymptotics, we consider relatively small $k,m$ values as compared to other works~\cite{Bah,Dossal2009}; these adopted values are nevertheless ``practical'', in the sense they come an implementation paper~\cite{Fred}.

\begin{table}[t]
	\centering
\caption{Asymptotic Lower and Upper Bounds on ``Worst-Case'' Eigenvalues,~\cite{Bah}}
		\begin{tabular}{|@{\hspace{.3ex}}c@{\hspace{.3ex}}|c|c|c|c||c|c|c|}
			\cline{1-8}
			    &   & \multicolumn{3}{|c||}{Minimum: $\sigma^2_{\mbox{\scriptsize min}}$} 
			        & \multicolumn{3}{|c|}{Maximum: $\sigma^2_{\mbox{\scriptsize max}}$} \\\cline{3-8}
			    &   & \multicolumn{3}{|c||}{$m/n$} & \multicolumn{3}{|c|}{$m/n$} \\\cline{3-8}
					&   & \emph{0.1} &  \emph{0.3} & \emph{0.5} & \emph{0.1} & \emph{0.3} & \emph{0.5}\\\hline 
			 \multirow{3}{*}{\begin{sideways}$k/m$ \end{sideways}} 
			        & \emph{0.1} & 0.095	& 0.118	& 0.130	& 3.952	& 3.610	& 3.459 \\\cline{2-8}
			        & \emph{0.2} & 0.015	& 0.026	& 0.034	& 5.587	& 4.892	& 4.535 \\\cline{2-8}
			        & \emph{0.3} & 0.003	& 0.006	& 0.010	& 6.939	& 5.806	& 5.361 \\\hline
		\end{tabular}
		\label{tab:1}
\end{table}

Differences are apparent from comparing ``average-case'' (Figure \ref{fig:Gauss}) and ``worst-case'' (Table \ref{tab:1}) behavior. Consider $k/m = 0.3$ where Table \ref{tab:1} shows for all undersampling ratios $m/n$, the worst-case estimate of $\eigm$ is very small, approximately $0.01$.
But for fixed $m=50$ and $m=150$, Figures \ref{fig:Gauss}$(a)$ and $(b)$ show that for respectively $k= 0.3\cdot (150)=15$ and $k=45$, a large fraction of subsets $\S$ seem to have $\eigm(\Sens_\S)$ lying above $0.1$. 
From Table \ref{tab:1}, the estimates for $\eigm$ gets worse (\textit{i.e.}, gets smaller) as $m/n$ decreases.
But the error $\eps_n(a)$ in Theorem \ref{thm:Ustat} vanishes with larger $n/k$.
For the other $\eigM$ case, we similarly observe that the values in Table \ref{tab:1} also appear more ``pessimistic''.

We emphasize that Theorem \ref{thm:Ustat} holds regardless of distribution. Figure \ref{fig:BernUnif} is the counterpart figure for Bernoulli and Uniform cases (\textit{i.e.}, each entry $A_{ij}$ is respectively drawn uniformly from $\{-1/\sqrt{m},1/\sqrt{m}\}$, or $\{a \in \Real : |a| \leq \sqrt{3/m}\}$), shown for $m=50$. Minute differences are seen when comparing with previous Figure \ref{fig:Gauss}.
For $k=3$, we observe the fraction $p(a)$ corresponding to $\eigM$ to be roughly 0.95 in the latter case, whereas in the former we have roughly $0.9$ in Figure \ref{fig:BernUnif}$(a)$, and $0.88$ in Figure \ref{fig:BernUnif}$(b)$. 



\begin{rem} \label{rem:1}
Exponential bounds on $\Pr\{\min_\S \sigm^2(\pmb{A}_\S) < 1-\RIC\}$ and $\Pr\{\max_\S \sigM^2(\pmb{A}_\S) > 1+\RIC\}$ for $\max(\RIC,\sqrt{k/m}) < \sqrt{2}-1$, see (\ref{eqn:RIC2}), employed in ``worst-case'' analyses, give the optimal $m  = \mathcal{O}(k\log(n/k))$ rate, see~\cite{Candes,Rudelson2010,RIPsimp}. 
However the implicit constants are inherently not too small (\textit{i.e.}, these constants cannot be improved). 
\end{rem}


These comparisons motivate ``average-case'' analysis. Marked out on Figures \ref{fig:Gauss} and \ref{fig:BernUnif} are the ranges for which $\eigM$ and $\eigm$ must lie to apply Theorem \RIPthm~(``worst-case'' analysis). In the cases shown above, the observations are somewhat disappointing - even for small $k$ values, a substantial fraction of eigenvalues lie outside of the required range. 
Thankfully, there exist ``average-case'' guarantees, \textit{e.g.}, previous Theorems~\BargThm~and~\CandThm, addressed in the next section.

\ifthenelse{\boolean{dcol}}{ 
\begin{figure}[!t]
	\centering
	  \epsfig{file={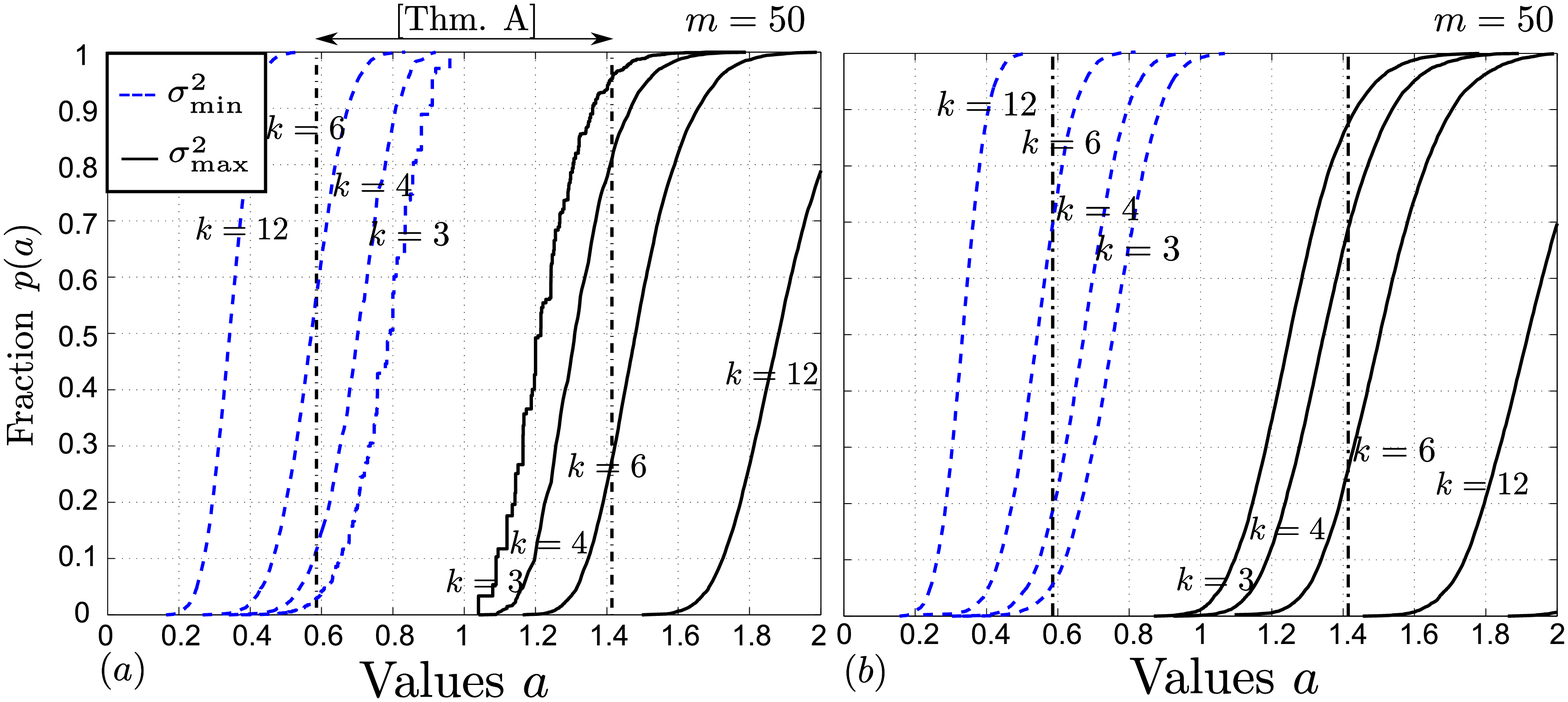},width=1\linewidth}
		\caption{Means $p(a) = \E \U_n(a)$ for $m=50$ and the $(a)$ Bernoulli and $(b)$ Uniform cases.}
		\label{fig:BernUnif}
		\vspace*{-15pt}
\end{figure}
}{
\begin{figure}[!t]
	\centering
	  \epsfig{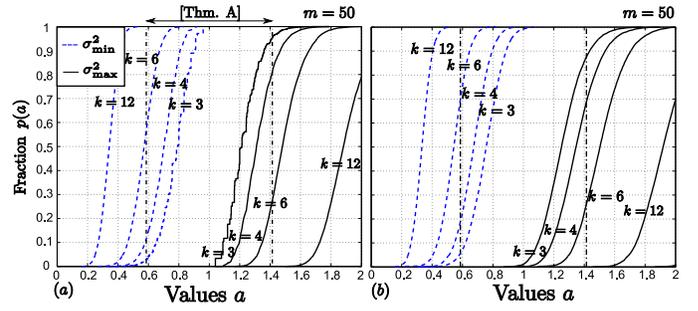}
		\caption{Means $p(a) = \E \U_n(a)$ for $m=50$ and the $(a)$ Bernoulli and $(b)$ Uniform cases.}
		\label{fig:BernUnif}
		\vspace*{-15pt}
\end{figure}
}

\newcommand{\mtail}{p}
\newcommand{\jtail}{q}
\newcommand{\z}{a}
\newcommand{\Lam}{\lambda_n}
\newcommand{\kernel}{\zeta}

\renewcommand{\define}{=}
\newcommand{\PoiThm}{2.N}

\newcommand{\Iv}{^{-1}}
\newcommand{\R}{\mathcal{R}}
\section{U-statistics \& ``Average-case'' Recovery Guarantees} \label{sect:KKT}

\newcommand{\s}{\pmb{\beta}}
\newcommand{\jp}{\ell}
\newcommand{\Rj}{{\R\setminus \{j\}}}

\subsection{Counting argument using U-statistics} \label{ssect:count}

Previously we had explained how the \emph{invertability} conditions required by Theorems \BargThm~and \CandThm~naturally relate to U-statistics. 
We now go on to discuss the other conditions,  whereby the relationship may not be immediate.
We begin with the \emph{projections} conditions, in particular 
the \emph{worst-case projections} condition.
For given $\Sens$, we need to \emph{upper bound} the fraction of subsets $\S$, for which there exists \emph{at least one} column $\col_j$ where $j\notin\S$, such that $\Norm{\Sens_{\S}^\dagger \col_j }{\infty}$ exceeds some value $a$. 
To this end, let $\R$ denote a size-$(k+1)$ subset, and $\Rj$ is the size-$k$ subset excluding the index $j$.
Consider the bounded kernel $g : \Real^{m \times (k+1)} \times \Real \mapsto \ZO$ set as
\bea
g(\mat{A},a) = \frac{1}{k+1}   \sum_{j=1}^{k+1}  \I{ \Norm{\mat{A}_{\Rj}^\dagger \mat{a}_j}{\infty}> a},
\label{eqn:g2}
\eea
where here $\R = \{1,2,\cdots, k+1\}$, and $\mat{a}_j$ denotes the $j$-th column of $\mat{A}$. 
Consider the U-statistic with bounded kernel (\ref{eqn:g2}). We claim that  
\ifthenelse{\boolean{dcol}}{ 
\begin{align}
(n&-k)\cdot U_n(a) \nn
&= \frac{n-k}{(k+1)\Bin{n}{k+1}} \sum_{\R} \!  \sum_{j\in \R}  \I{ \Norm{\Sens_\Rj^\dagger \col_j}{\infty} > a}, \nn
&= \frac{1}{\Bin{n}{k}} \sum_{\S} \!  \sum_{j\notin \S}  \I{ \Norm{\Sens_\S^\dagger \col_j}{\infty} > a}, \nonumber
\end{align}}{ 
\bea
(n-k)\cdot U_n(a) 
&=& \frac{n-k}{(k+1)\Bin{n}{k+1}} \sum_{\R} \!  \sum_{j\in \R}  \I{ \Norm{\Sens_\Rj^\dagger \col_j}{\infty} > a}, \nn
&=& \frac{1}{\Bin{n}{k}} \sum_{\S} \!  \sum_{j\notin \S}  \I{ \Norm{\Sens_\S^\dagger \col_j}{\infty} > a}, \nonumber
\eea}
where the summations over $\R$ and $\S$ are over all size-$(k+1)$ subsets, and all size-$k$ subsets, respectively.
The first equality follows from Definition \ref{def:Ustat} and (\ref{eqn:g2}). 
The second equality requires some manipulation.
First the coefficient $\Bin{n}{k}\Iv$ follows from the binomial identity $ \Bin{n}{k+1}\cdot (k+1) =  \Bin{n}{k}\cdot (n-k)$.
Next for some subset $\S$ and index $j$, write the indicator $\I{ \Norm{\Sens_\S^\dagger \col_j}{\infty} > a}$ as $\1_{\S,j}$ for brevity's sake.
By similar counting that proves the previous binomial identity, we argue $\sum_\R \sum_{j\in \R} \1_{\Rj,j} = \sum_\S \sum_{j\notin \S}\1_{\S,j}$, which then proves the claim.
Imagine a grid of ``pigeon-holes'', indexed by pairs $(\S,j)$, where $j \notin\S$. 
For each size-$(k+1)$ subset $\R$, we assign $k+1$ indicators $\1_{\Rj,j}$ to $k+1$ pairs $(\S,j)$.
No ``pigeon-hole'' gets assigned more than once.
In fact we infer from the binomial identity, that every ``pigeon-hole'' is in fact assigned exactly once, and argument is complete.

Similarly for the \emph{small projections} condition, we define a different bounded kernel $g : \Real^{m \times (k+1)} \times \Real \mapsto \ZO$ as
\bea
g(\mat{A},a) = \frac{1}{2^k(k+1)} \sum_{\jp=1}^{2^k}  \sum_{j=1}^{k+1}   \I{ \left|(\mat{A}_{\Rj}^\dagger \mat{a}_j)^T\s_\jp \right|> a},
\label{eqn:g3}
\eea
where $\R = \{1,2,\cdots, k+1\}$, and $\mat{a}_j$ denotes the $j$-th column of $\mat{A}$, and $\s_1,\s_2,\cdots,\s_{2^k}$ enumerate all $2^k$ unique sign-vectors in the set $\{-1,1\}^k$.
By similar arguments as before, we can show for the U-statistic $U_n(a)$ of $\Sens$ corresponding to the bounded kernel (\ref{eqn:g3}) satisfies
\ifthenelse{\boolean{dcol}}{
\bea
(n-k)\cdot U_n(a) 
\!\!\! &=& 
\!\!\!\frac{1}{2^k \Bin{n}{k}} \sum_{\jp=1}^{2^k} \sum_{\S} \!  \sum_{j\notin \S}  \I{\left|(\Sens_\S^\dagger \col_j)^T \s_\jp\right| > a}, \nonumber
\eea}{
\bea
(n-k)\cdot U_n(a) 
&=& \frac{1}{2^k \Bin{n}{k}} \sum_{\jp=1}^{2^k} \sum_{\S} \!  \sum_{j\notin \S}  \I{\left|(\Sens_\S^\dagger \col_j)^T \s_\jp\right| > a}, \nonumber
\eea}
For indicators $\1_{\S,j}$, note that $\sum_{j\notin \S} \1_{\S,j} \geq 1$ if \emph{at least one} indicator satisfying $\1_{\S,j} =1$, and we proved the following.
%
%

\begin{pro} \label{cor:count}
Let $\V_n(a_3)$ be the U-statistic of $\Sens$, corresponding to the bounded kernel $g(\mat{A},a_3)$ in (\ref{eqn:g2}).  
Then the fraction of subsets $\S$ of size-$k$, for which the worst-case projections condition is violated for some $a_3\in \Real$, is at most $(n-k)\cdot U_n(a_3)$. 
Similarly if $\V_n(a_2)$ corresponds to $g(\mat{A},a_2)$ in (\ref{eqn:g3}), the fraction sign-subset pairs $(\s,\S)$, for which the small projections condition is violated for some $a_2\in \Real$, is at most $(n-k)\cdot U_n(a_2)$. 
\end{pro}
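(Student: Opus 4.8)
The plan is to reuse the counting identity derived immediately above the statement, namely that for the kernel $g(\mat{A},a_3)$ of (\ref{eqn:g2}),
\[
(n-k)\cdot U_n(a_3) = \frac{1}{\Bin{n}{k}}\sum_{\S}\sum_{j\notin\S} \I{\Norm{\SensS^\dagger\col_j}{\infty} > a_3},
\]
and for $g(\mat{A},a_2)$ of (\ref{eqn:g3}),
\[
(n-k)\cdot U_n(a_2) = \frac{1}{2^k\Bin{n}{k}}\sum_{\jp=1}^{2^k}\sum_{\S}\sum_{j\notin\S} \I{|(\SensS^\dagger\col_j)^T\s_\jp| > a_2}.
\]
Once these are in hand, the proposition follows from a single union-bound (indicator-domination) step, so essentially all of the real work is the pigeon-hole double counting already carried out, which I would simply cite.

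For the first claim I would fix $a_3$ and call a size-$k$ subset $\S$ \emph{bad} if the worst-case projections condition fails on it, i.e. if $\Norm{\SensS^\dagger\col_j}{\infty} > a_3$ for at least one $j\notin\S$; writing $\mathcal{B}$ for the set of bad subsets, the quantity to bound is $|\mathcal{B}|/\Bin{n}{k}$. Abbreviating $\1_{\S,j}=\I{\Norm{\SensS^\dagger\col_j}{\infty} > a_3}$ exactly as in the preceding discussion, the inner sum $\sum_{j\notin\S}\1_{\S,j}$ is a nonnegative integer that equals $0$ when $\S$ is not bad and is at least $1$ when it is, so $\I{\S\in\mathcal{B}} \le \sum_{j\notin\S}\1_{\S,j}$ for every $\S$. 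Summing over all size-$k$ subsets $\S$ and invoking the identity above gives
\[
\frac{|\mathcal{B}|}{\Bin{n}{k}} = \frac{1}{\Bin{n}{k}}\sum_{\S}\I{\S\in\mathcal{B}} \le \frac{1}{\Bin{n}{k}}\sum_{\S}\sum_{j\notin\S}\1_{\S,j} = (n-k)\cdot U_n(a_3),
\]
which is the first assertion. The second claim I would handle the same way but one level up, counting sign-subset pairs $(\s_\jp,\S)$ rather than subsets, so that the normalization runs over the $2^k\Bin{n}{k}$ such pairs: declare $(\s_\jp,\S)$ bad when $|(\SensS^\dagger\col_j)^T\s_\jp| > a_2$ for some $j\notin\S$, dominate its indicator by $\sum_{j\notin\S}\I{|(\SensS^\dagger\col_j)^T\s_\jp| > a_2}$, sum over $\jp$ and $\S$, and use the second identity above to obtain the bound $(n-k)\cdot U_n(a_2)$ on the fraction of bad pairs.

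There is no genuinely hard step here. The points to be careful about are, first, that the bound is deliberately lossy — a subset (or pair) on which several columns $j$ violate the condition is counted once per violating $j$ on the right-hand side — which is harmless for an upper bound but means $(n-k)U_n$ is only informative when $U_n$ is of order smaller than $1/(n-k)$; and second, keeping the norm used in the definition of ``bad'' consistent with the norm appearing in the kernel (\ref{eqn:g2}) and in the worst-case projections condition of Theorem \BargThm, so that the indicator domination is the correct event inclusion.
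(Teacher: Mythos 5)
Your proposal is correct and follows essentially the same route as the paper: the pigeon-hole identities expressing $(n-k)\cdot U_n(a)$ as an average of $\sum_{j\notin\S}\1_{\S,j}$, followed by the observation that this inner sum is at least $1$ whenever the condition is violated on $\S$ (resp.\ on $(\s,\S)$). The paper's own proof is exactly this one-line indicator domination appended to the counting argument, so there is nothing further to add.
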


Referring back to Theorem \BargThm, we point out that the \emph{small projections} condition is more stringent than the \emph{worst-case projections} condition. We mean the following:
in the former case, the value $a_2$ must be chosen such that $a_2 < 1$;  
in the latter case, the value $a_3$ is allowed to be larger than $1$, its size only affects the constant $2 a_3/(1-a_2)$ appearing in the error estimate $\Norm{\matt{\x}_\S^*-\matt{\x}_\S}{1}$.
In fact if the signal $\matt{\x}$ is $k$-sparse, then $\Norm{\matt{\x} - \xbk}{1} = 0$ and the size of $a_3$ is inconsequential, \textit{i.e.}, the \emph{worst-case projections} condition is not required in this special case.
In this special case, it is best to set $a_2 = 1-\eps$ for some arbitrarily small $\eps$.
Theorem \BargThm~is in fact a stronger version of Fuchs' early work on $\ell_0/\ell_1$-\emph{equivalence}~\cite{Fuchs}.
In the same respect, Donoho \& Tanner also produced early seminal results from counting faces of random polytopes~\cite{DTExp,DT}.

Figure \ref{fig:CompL1} shows empirical evidence, where the $k,m,n$ values are inspired by practical system sizes taken from an implementation paper~\cite{Fred}. 
These experiments consider $\Sens$ sampled from Gaussian matrices $\pmb{A}$, \emph{exactly} $k$-sparse signals with non-zero $\alpha_i$ sampled from $\{-1,1\}$, and uses $\ell_1$-minimization recovery (\ref{eqn:L1}).
Figure \ref{fig:CompL1}$(a)$ plots simulated (sparsity pattern recovery) results for 3 measurement sizes $m=50,100$ and $150$ and block sizes $n \geq 200$ and $n \leq 3000$. 
For example the contour marked ``0.1'', delineates the $k,n$ values for which recovery fails for a 0.1 fraction of (random) sparsity patterns (sign-subset pairs $(\s,\S)$).  
We examine the U-statistic $U_n(a_2)$ with kernel (\ref{eqn:g3}), related to the small projections condition.
Since $\pmb{A}$ has Gaussian distribution, we set $a_2=1$ in the kernel $g(\mat{A},a_2)$, as $\Pr\{(\pmb{A}_\S^\dagger\pmb{A}_i)^T\s = 1\}=0$ for any $(\s,\S)$ and $j\notin\S$.
Figure \ref{fig:CompL1}$(b)$ plots the expectation $(n-k)\cdot p(1)$, where $p(1) = \E U_n(1)= \E g(\pmb{A}_\R,1)$ for any size-$(k+1)$ subset $\R$. 
Again the contour marked ``0.1'', delineates the $k,n$ values for which $(n-k)\cdot p(1)= 0.1$.
Here the values $p(1)$ are empirical.
We observe that both Figures \ref{fig:CompL1}$(a)$ and $(b)$ are remarkably close for fractions $0.5$ and smaller. 
Figures \ref{fig:CompL1}$(c)$ incorporates the large deviation error $\eps_n$ given in Theorem \ref{thm:Ustat} (in doing so, we assume $n$ sufficiently large). 
The bound is still reasonably tight for fractions $\leq 0.5$. 
Comparing with recent Donoho \& Tanners' (also ``average-case'') results for $\ell_1$-recovery (for only the noiseless case), taken from~\cite{DTExp}.
For fractions $0.5$ and $0.01$, we observe that for system parameters $m=50$ and $n\leq 1000$ (chosen in hardware implementation~\cite{Fred}), we do not obtain reasonable predictions. For $m=100$, the bounds~\cite{DTExp} work only for very small block lengths $n\leq 300$. The only reasonable case here is $m=150$, where the bounds~\cite{DTExp} perform better than ours only for lengths $n\leq 400$ (\textit{i.e.}, Figure \ref{fig:CompL1}$(c)$ shows that for $n=300$, the large deviation bounds predict a 0.01 fraction of size $k=5$ unrecoverable sparsity patterns, but~\cite{DTExp} predict a 0.01 fraction of size $k=11$ unrecoverable sparsity patterns).

\begin{figure*}[!t]
	\centering
	  \epsfig{file={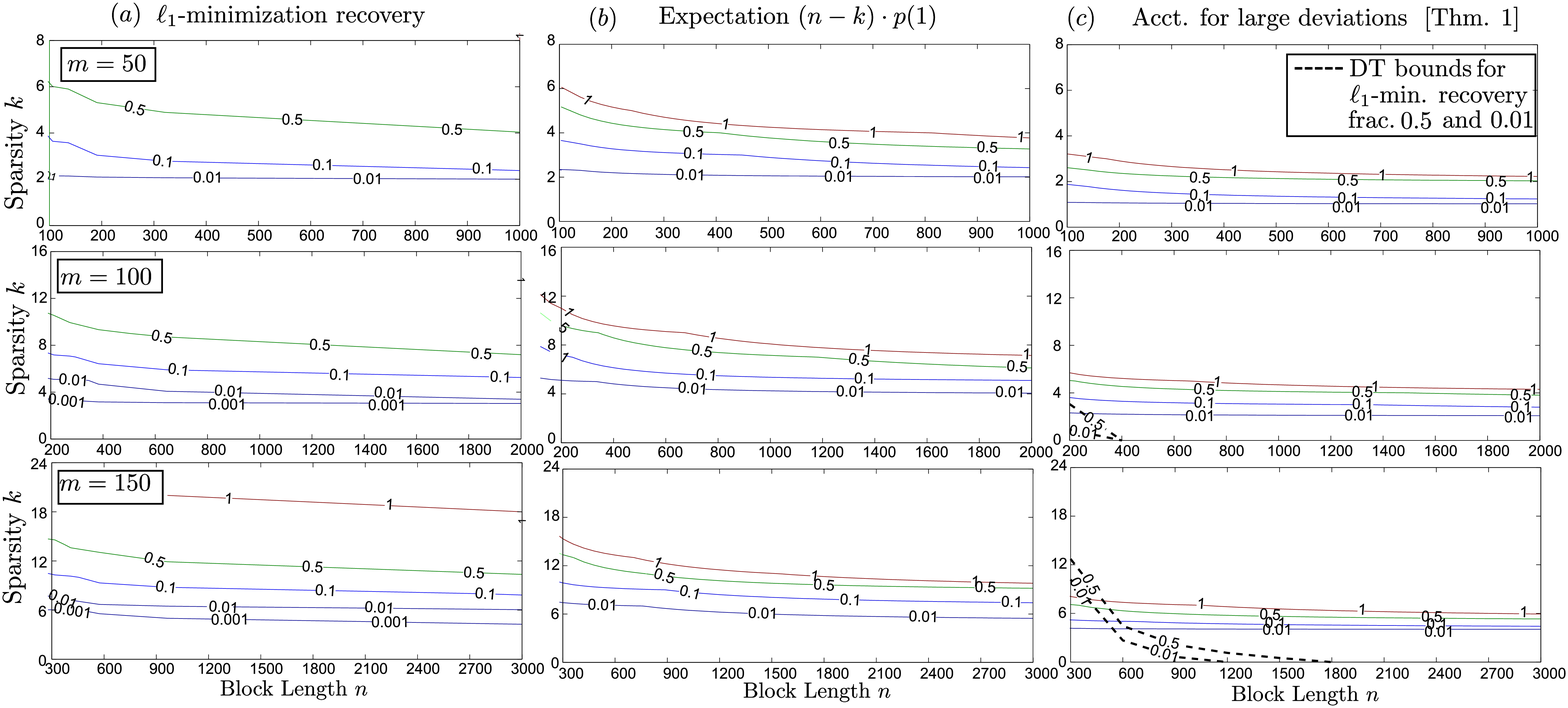},width=.9\linewidth}
		\caption{Gaussian case. Comparing $(a)$ empirical results for $\ell_1$-minimization recovery, $(b)$ mean parameter $(n-k)\cdot p(1)$ (empirically obtained), and $(c)$ after accounting for large deviations (Thm. \ref{thm:Ustat}). We show cases $m=50, 100$ and $150$. We also compare with Donoho \& Tanners' (DT) large deviation bounds~\cite{DTExp}.}
		\label{fig:CompL1}
		\vspace*{-15pt}
\end{figure*}
%


The above experiments suggest the deviation error $\eps_n(a)$ in Theorem \ref{thm:Ustat} to be over-conservative. 
Fortunately in the next two subsections (pertaining to U-statistics treastise of $\ell_1$-recovery Theorem \BargThm~(Section \ref{ssect:ell1}), and LASSO recovery Theorem \CandThm~(Subsection \ref{ssect:Lasso})), this conservative-ness does not show up from a rate standpoint (it only shows up in implicit constants).
In fact by empirically ``adjusting'' these constants, we find good measurement rate predictions (akin to moving from Figure \ref{fig:CompL1}$(c)$ to $(b)$).

\subsection{Rate analysis for $\ell_1$-recovery (Theorem \BargThm)} \label{ssect:ell1}

In ``worst-case'' analysis, it is well-known that it is sufficient to have measurements $m$ on the order of $k\log(n/k)$, in order to have the restricted isometry constants $\delta_k$ defined by (\ref{eqn:RIC}), satisfy the conditions in Theorem \RIPthm.
We now go on to show that for ``average-case'', a similar expression for this rate can be obtained. To this end we require tail bounds on salient quantities. Such bounds have been obtained for the \emph{small projections} condition, see~\cite{Cand2008,Barga,Eldar2009}, where typically an equiprobable distribution is assumed over the sign-vectors $\matt{\beta}_\jp$. 
To our knowledge these techniques were born from considering deterministic matrices. 
Since $\Sens$ is randomly sampled here, we proceed slightly differently (though essentially using similar ideas) without requiring this random signal model.
For simplicity, the bound assumes zero mean matrix entries, either i) Gaussian or ii) bounded.


\begin{pro} \label{pro:proj}
Let $\pmb{A}$ be an $m \times n$ random matrix, whereby its columns $\A_i$ are identically distributed. 
Assume every entry $A_{ij}$ of $\A$ has zero mean, i.e., $\E A_{ij} = 0$.
Let every $A_{ij}$ be either i) Gaussian with variance $1/m$, or ii) bounded RVs satisfying $|A_{ij}| \leq 1/\sqrt{m}$.
Let the rows $[A_{i1},A_{i2},\cdots, A_{in}]$ of $\A$ be IID.

Let $\S$ be a size-$k$ subset, and let index $\omega$ be outside of $\S$, i.e., $\omega\notin\S$. Then for any sign vector $\s$ in $\{-1,1\}^k$, we have 
\ifthenelse{\boolean{dcol}}{
\begin{align}
\Pr\left\{ \left|(\pmb{A}_\S^\dagger \pmb{A}_\omega)^T \s \right| > a \right\} 
\leq &~2\exp\left( - \frac{m a^2 \delta}{2 k} \right) \nn
&+ \Pr\{ \eigm(\A_\S) \leq \delta\} \label{eqn:proj0}
\end{align}}{ 
\bea
\Pr\left\{ \left|(\pmb{A}_\S^\dagger \pmb{A}_\omega)^T \s \right| > a \right\} 
\leq  2\exp\left( - \frac{m a^2 \delta}{2 k} \right)
+ \Pr\{ \eigm(\A_\S) \leq \delta\} \label{eqn:proj0}
\eea}
for any positive $\delta \in \Real$.
\end{pro}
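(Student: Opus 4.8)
The plan is to condition on the submatrix $\pmb{A}_\S$ and reduce the claim to a one-dimensional sub-Gaussian tail estimate for a weighted sum of the (independent) entries of the single column $\pmb{A}_\omega$. The starting point is the identity $(\pmb{A}_\S^\dagger \pmb{A}_\omega)^T\s = \pmb{A}_\omega^T\pmb{v}$, where $\pmb{v} = (\pmb{A}_\S^\dagger)^T\s$ depends on $\pmb{A}_\S$ alone. First I would record a deterministic bound on $\Norm{\pmb{v}}{2}$: whenever $\pmb{A}_\S$ has full column rank one may write $\pmb{A}_\S^\dagger = (\pmb{A}_\S^T\pmb{A}_\S)^{-1}\pmb{A}_\S^T$, so $\Norm{\pmb{v}}{2}^2 = \s^T(\pmb{A}_\S^T\pmb{A}_\S)^{-1}\s \le \Norm{\s}{2}^2/\eigm(\pmb{A}_\S) = k/\eigm(\pmb{A}_\S)$, using $\Norm{\s}{2}^2 = k$; and when $\pmb{A}_\S$ is rank-deficient, $\eigm(\pmb{A}_\S)=0$, a case that is absorbed below. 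In particular, on the ($\pmb{A}_\S$-measurable) event $\{\eigm(\pmb{A}_\S) > \delta\}$ one has $\Norm{\pmb{v}}{2}^2 \le k/\delta$.

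Next I would split
\[
\Pr\{|\pmb{A}_\omega^T\pmb{v}| > a\} \le \Pr\{|\pmb{A}_\omega^T\pmb{v}| > a,\ \eigm(\pmb{A}_\S) > \delta\} + \Pr\{\eigm(\pmb{A}_\S) \le \delta\},
\]
where the last term is already the second summand of (\ref{eqn:proj0}). For the first term I would condition on $\pmb{A}_\S$. Since the columns of $\pmb{A}$ are independently sampled (the standing assumption of the paper) and identically distributed, $\pmb{A}_\omega$ is independent of $\pmb{A}_\S$; hence, given $\pmb{A}_\S$, the vector $\pmb{v}$ is frozen while $\pmb{A}_\omega$ retains its marginal law. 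Moreover, because the rows of $\pmb{A}$ are IID, the entries $A_{1\omega},\dots,A_{m\omega}$ are mutually independent, each zero-mean, and each either $N(0,1/m)$ (Gaussian case) or supported in $[-1/\sqrt{m},\,1/\sqrt{m}]$ (bounded case).

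Then I would apply a standard sub-Gaussian tail bound to $\pmb{A}_\omega^T\pmb{v} = \sum_{i=1}^m A_{i\omega} v_i$: in the Gaussian case this is an $N(0,\Norm{\pmb{v}}{2}^2/m)$ variable, and in the bounded case Hoeffding's inequality applies with the $i$-th summand ranging over an interval of length $2|v_i|/\sqrt{m}$. Either way,
\[
\Pr\{|\pmb{A}_\omega^T\pmb{v}| > a \mid \pmb{A}_\S\} \le 2\exp\!\left(-\frac{m a^2}{2\Norm{\pmb{v}}{2}^2}\right).
\]
Restricting to the event $\{\eigm(\pmb{A}_\S) > \delta\}$, which — being $\pmb{A}_\S$-measurable — pulls outside the conditional expectation, and using $\Norm{\pmb{v}}{2}^2 \le k/\delta$ there, bounds the first term by $2\exp(-m a^2\delta/(2k))$. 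Adding the two pieces yields (\ref{eqn:proj0}).

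The only delicate point, I expect, is the conditioning bookkeeping rather than any genuine difficulty: one needs ``columns independent'' precisely to decouple $\pmb{v}$ from $\pmb{A}_\omega$, ``rows IID'' precisely to make the entries of $\pmb{A}_\omega$ mutually independent so that the Gaussian-sum identity and Hoeffding's inequality are legitimate, and one must notice that the event $\{\eigm(\pmb{A}_\S) > \delta\}$ silently disposes of the rank-deficient case. Everything else is a routine one-dimensional estimate, and — conveniently — the constants emerging in the Gaussian and bounded cases coincide.
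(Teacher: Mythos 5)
Your proposal is correct and follows essentially the same route as the paper's own proof: condition on $\pmb{A}_\S$, apply a sub-Gaussian/Hoeffding tail to the weighted sum $\sum_i A_{i\omega}v_i$ with $\pmb{v}=(\pmb{A}_\S^\dagger)^T\s$, bound the quadratic form $\s^T(\pmb{A}_\S^T\pmb{A}_\S)^\dagger\s$ by $k/\eigm(\pmb{A}_\S)$, and split over the event $\{\eigm(\pmb{A}_\S)\leq\delta\}$. The only cosmetic difference is that the paper conditions on the intermediate event $\{\s^T(\pmb{A}_\S^T\pmb{A}_\S)^\dagger\s\leq\tau\}$ before translating it into the singular-value event, whereas you work with $\{\eigm(\pmb{A}_\S)>\delta\}$ directly — which in fact matches the stated form of (\ref{eqn:proj0}) more cleanly.
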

{
\renewcommand{\A}{\pmb{A}}
\renewcommand{\b}{\pmb{B}}
\newcommand{\w}{\omega}
\newcommand{\Ev}{\mathcal{E}}
\renewcommand{\eigM}{\varsigma_{\scriptsize \mbox{\upshape max}}}
\begin{proof}
For $\tau \in \Real$, let $\Ev(\tau) = \{\s^T (\A_\S^T \A_\S)^\dagger \s \leq \tau \}$ where $\Ev(\tau)$ is an probabilistic event. Let $\Ev_c(\tau)$ denote the complementary event.  
Bound the probability as
\ifthenelse{\boolean{dcol}}{
\begin{align}
 \Pr\left\{ \left|(\pmb{A}_\S^\dagger \pmb{A}_\omega)^T \s \right| > a \right\} 
 \leq& \Pr\left\{\left. \left|(\pmb{A}_\S^\dagger \pmb{A}_\omega)^T \s \right| > a  \right| \Ev(\tau) \right\} \nn
 &+  \Pr\{\Ev_c(\tau)\}.
  \label{eqn:proj1}
\end{align}
}{ 
\bea
 \Pr\left\{ \left|(\pmb{A}_\S^\dagger \pmb{A}_\omega)^T \s \right| > a \right\} \leq 
 \Pr\left\{\left. \left|(\pmb{A}_\S^\dagger \pmb{A}_\omega)^T \s \right| > a  \right| \Ev(\tau) \right\} +  \Pr\{\Ev_c(\tau)\}.
  \label{eqn:proj1}
\eea}

We upper bound the first term as follows.
Denote constants $c_1,c_2, \cdots, c_m$. 
For entries $(\A_\w)_i$ of $\A_\w$, consider the sum $ \sum_{i=1}^m c_i \cdot ( m^{-\half}\A_\w)_i = \frac{1}{m}\sum_{i=1}^m c_i X_i$, where RVs $X_i$ satisfy $X_i = (\sqrt{m}  \A_\w)_i$. 
By standard arguments (see Supplementary Material \ref{sup:stdbound}) we have the double-sided bound  $\Pr\left\{\left|\sum_{i=1}^m c_i X_i \right| >  m t \right\} \leq 2\exp\left(-(mt)^2/( 2 \cdot \Norm{\mat{c}}{2}^2) \right)$, where vector $\mat{c}$ equals $[c_1,c_2,\cdots, c_m]^T$.


Next write $(\A_\S^\dagger \A_\w)^T \s = ( \sqrt{m} \cdot \s^T \A_\S^\dagger ) (m^{-\half} \A_\w)$. 
When conditioning on $ \s^T \A_\S^\dagger $, then $ \sqrt{m} \cdot \s^T \A_\S^\dagger $ is fixed, say equals some vector $\mat{c}$. Put $X_i = (\sqrt{m}  \A_\w)_i$ and $X_i$'s are independent (by assumed independence of the rows of $\pmb{A}$). 
Then use the above bound for $\Pr\left\{\sum_{i=1}^m c_i X_i >   t \right\}$, set $t=a$ and conclude
\ifthenelse{\boolean{dcol}}{
\begin{align}
	&\Pr\left\{\left. \left|(\pmb{A}_\S^\dagger \pmb{A}_\omega)^T \s \right| > a  \right| \s^T \A_\S^\dagger \right\} \nn
  &\leq 2\exp\left(- \frac{(m a)^2}{ 2 m \Norm{\s^T \A_\S^\dagger}{2}^2 } \right) = 2\exp\left(- \frac{m a^2}{ 2  \cdot \s^T (\A_\S^T \A_\S)^\dagger \s  } \right),
\end{align}}{
\[
	\Pr\left\{\left. \left|(\pmb{A}_\S^\dagger \pmb{A}_\omega)^T \s \right| > a  \right| \s^T \A_\S^\dagger \right\} 
  \leq 2\exp\left(- \frac{(m a)^2}{ 2 m \Norm{\s^T \A_\S^\dagger}{2}^2 } \right)
  = 2\exp\left(- \frac{m a^2}{ 2  \cdot \s^T (\A_\S^T \A_\S)^\dagger \s  } \right),
\]}
where the last equality follows from the identity $\A_\S^\dagger (\A_\S^\dagger)^T = (\A_\S^T \A_\S)^\dagger $.
Further conclude that the first term in (\ref{eqn:proj1}) is bounded by $2\exp( - m a^2/(2 \tau))$, due to further conditioning on the event $\Ev(\tau) = \{\s^T (\A_\S^T \A_\S)^\dagger \s \leq \tau \}$.

To bound the second term, let $\eigM(\mat{A})$ denote the maximum eigenvalue of matrix $\mat{A}$.
Since $\A_\S^T \A_\S$ is positive semidefinite, note that $\s^T (\A_\S^T \A_\S)^\dagger \s  $ is upper bounded by $\Norm{\s}{2}^2 \cdot \eigM((\A_\S^T \A_\S)^\dagger) $, which equals $  k \cdot \eigM((\A_\S^T \A_\S)^\dagger) $. 
Furthermore $\eigM((\A_\S^T \A_\S)^\dagger) \leq 1/\eigm(\A_\S)$, where here $\sigm(\mat{A})$ is the minimum singular value of $\mat{A}$. 
Thus $\Pr\{\Ev_c(\tau) \} \leq \Pr\{k/\eigm(\A_\S) > \tau\}$.
Finally put $\tau = \delta k$ 
to get $\Pr\{\Ev_c(\tau) \} \leq \Pr\{ \eigm(\A_\S) \leq \delta\Iv\}$.
\end{proof}
}


\renewcommand{\fn}{\footnote{For $m \geq 2k$, we have $\Pr\{\sigm(\pmb{A}) < c \cdot 0.29 - t \} \leq \Pr\{\sigm(\pmb{A}) < 1 - c \cdot  \sqrt{k/m} - t \}\leq \exp(-m t^2/c_1)$ for some constants $c,c_1$, where $\pmb{A}$ has size $m\times k$ and with proper column normalization. For simplicity we drop the constant $c$ in this paper; one simply needs to add $c$ in appropriate places in the exposition. In particular for the Gaussian and Bernoulli cases $c=1$, and $c_1=2$ and $c_1=16$, respectively, see Theorem B,~\cite{Lim2}.}}

Proposition \ref{pro:proj} is used as follows. First recall that previous Proposition \ref{cor:count} allows us to upper bound the fraction $u_2$ of sign-subset pairs $(\s, \S)$ \emph{failing} the \emph{small projections} condition, with the (scaled) U-statistic $(n-k)\cdot U_n(a_2)$ with kernel $g$ in (\ref{eqn:g3}) and $|\S|=k$.
By Theorem \ref{thm:Ustat} the quantity $(n-k)\cdot U_n(a_2)$ concentrates around $(n-k)\cdot p(a_2)$, where $p(a_2)=\E g(\A_\R,a_2)$, where $g$ in (\ref{eqn:g3}) is defined for size-$(k+1)$ subsets $\R$.
We use Proposition \ref{pro:proj} to upper estimate $p(a_2)$ using the RHS of (\ref{eqn:proj0}).
Indeed verify that $p(a_2) = 2^{-k} \sum_{\jp} \Pr\{ |(\pmb{A}_\S^\dagger \pmb{A}_\omega)^T \s_\jp | > a_2 \}$ for any $S$ and $\omega \notin S$, and the bound (\ref{eqn:proj0}) holds for any $\s=\s_\ell$. Now $p(a_2)$ is bounded by two terms. 
By $u_2 \leq (n-k) \cdot U_n(a_2)$, thus to have $u_2$ small, we should have the (scaled) first term $2(n-k)\cdot\exp(-ma_2^2\RIC/(2 k))$ of (\ref{eqn:proj0}) to be at most some small fraction $u$. This requires 
\bea
 m \geq \const \cdot k \log \left(\frac{n-k}{u} \right) \label{eqn:rate}
\eea
with $\const = 2/(a_2^2\RIC)$ (and we dropped an insignificant $\log 2$ term). 
Next, for $m \geq 2k$ and $\RIC < (0.29)^2$, we can bound\fn~the second term $\Pr\{ \eigm(\A_\S) \leq \delta\}$ of (\ref{eqn:proj0}) by $\exp(-m\cdot (0.29-\sqrt{\RIC})^2/c_1)$ where $c_1$ is some constant, see~\cite{Rudel}, Theorem 5.39. 
Roughly speaking, $\eigm(\A_\S) \geq 0.29$ with ``high probability''.
Figures \ref{fig:Gauss} and \ref{fig:BernUnif} (in the previous Section \ref{sect:Dist}) empirically support this fact.
Again to have $u_2$ small the second term of (\ref{eqn:proj0}) must be small. This requires $ (n-k) \cdot \exp(-m \cdot (0.29-\sqrt{\RIC})^2/c_1) \leq  u $ for some small fraction $u$, in which it suffices to have $m$ satisfy (\ref{eqn:rate})
with $\const = c_1/(0.29-\sqrt{\RIC})^2$. 

\newcommand{\w}{\omega}
For the \emph{invertability} condition in Theorem \BargThm, we also need to upper bound the corresponding fraction $u_1$ of size-$k$ subsets $\S$. We simply use an U-statistic $U_n(a_1)$ with kernel $g(\mat{A},a_1) = \Ind{\sigm(\mat{A}) > a_1}$ for some positive $a_1$ (see also Theorem \CandThm).
Here Proposition \ref{cor:count} is not needed.
To make $p(a_1)$ small, where $p(a_1) = \E g(\pmb{A}_\S,a_1)$, use the previous bound $p(a_1) \leq \exp(-m \cdot (0.29-a_1)^2/c_1) $, where we set $a_1 = \sqrt{\RIC}$ with $a_1 \leq 0.29$.
Clearly $p(a_1)$ cannot exceed some fraction $u$, if $m$ satisfies (\ref{eqn:rate}) with $\const = c_1/(0.29-a_1)^2$.

For the time being consider \emph{exactly} $k$-sparse signals $\matt{\x}$. 
In this special case the \emph{worst-case projections} condition in Theorem \BargThm~is superfluous (\textit{i.e.}, with no consequence $a_3$ can be arbitrarily big) -  
only \emph{invertability} and \emph{small projections} conditions are needed.
While we have yet to consider the large deviation error $\eps_n(a)$ from Theorem \ref{thm:Ustat}, doing so will not drastically change the rate. For $U_n(a)$ with kernel $g$ and $p(a)$, where $p(a)=\E g(\pmb{A},a)$, almost surely
\ifthenelse{\boolean{dcol}}{
\bea
\!\!\!\!\!\!\!\!\!U_n(a) \leq p(a) + \eps_n(a) 
\!\!\!&\leq& \!\!\!(p(a))^\half + \sqrt{2 p(a) \w^{-1} \log \w } \nn
\!\!\! &\leq& \!\!\! (p(a))^\half \left(1 +  \sqrt{2 \w^{-1} \log \w }\right) \label{eqn:Ubound}
\eea}{
\bea
U_n(a) \leq p(a) + \eps_n(a) \leq (p(a))^\half + \sqrt{2 p(a) \w^{-1} \log \w } \leq (p(a))^\half \left(1 +  \sqrt{2 \w^{-1} \log \w }\right) \label{eqn:Ubound}
\eea}
where the second inequality follows because $p(a) \leq 1$, and by setting $\w = n/k$.
Taking $\log$ of the RHS, we obtain $(1/2) \log p(a) + \log (1 + \sqrt{2 \w^{-1} \log \w }) $.
Note $\log (1 + \sqrt{2 \w^{-1} \log \w }) \leq \sqrt{2 \w^{-1} \log \w }$, since $\log(1+\alpha) \leq \alpha$ holds for all positive $\alpha$.
For the \emph{small projections} condition, bound $(p(a))^\half$ by the sum of the square-roots of each term in (\ref{eqn:proj0}). 
Then to have $u_2 \leq (n-k) \cdot U_n(a_2) \leq 2 u$, it follows similarly as before that it suffices that (see Supplementary Material \ref{sup:rate})
\bea
 m \geq \const \cdot k \left[ \log \left(\frac{n-k}{u} \right) + \sqrt{2 \cdot (k/n) \log (n/k) } \right] \label{eqn:rate2}
\eea
with $\const =  \max(4/(a_2^2\RIC), 2c_1/(0.29-\sqrt{\RIC})^2)$ where we had set $\sqrt{\RIC} = a_1$ (we dropped an insignificant $\log 2$ term). 
For \emph{invertability} condition do the same. 
To have $u_1 = U_n(a_1) \leq u$ it suffices that $m$ satisfies (\ref{eqn:rate2}) with the same $\const$.
Observe that the term $\sqrt{2 \cdot (k/n) \log (n/k) }$ is at most 1, and vanishes with high undersampling (small $k/n$).
Hence (\ref{eqn:rate}) and (\ref{eqn:rate2}) are similar from a rate standpoint.



\renewcommand{\fn}{\footnote{Comparing (\ref{eqn:rate2}) and (\ref{eqn:rate}) and the respective expressions for $\const$, dropping $\const$ from 4 to 1.8 is akin to ignoring the deviation error $\eps_n(a)$. 
This, and as Figure \ref{fig:CompL1} suggests, the U-statistic ``means'' $(n-k) \cdot p(1)$ seem to predict recovery remarkably well, with similar rates to (\ref{eqn:rate2}), and inherent $\const$ smaller than that derived here.}}


We conclude the following: for exactly $k$-sparse signals
the rate (\ref{eqn:rate2}) suffices to recover at least $1-3u$ fraction of sign-subset $(\s,\S)$ pairs. While $\const$ in (\ref{eqn:rate2}) must be at least $4$ (recall that Figure \ref{fig:CompL1}$(c)$ was somewhat pessimistic), 
for matrices with Gaussian entries 
we empirically find that $\const$ is inherently smaller, whereby $\const \approx 1.8$.
This is illustrated in Figure \ref{fig:MeasRates}, for two fractions $0.1$ and $0.01$ of unrecoverable sign-subset pairs.
We observe good match with simulation results shown in the previous Figure \ref{fig:CompL1}$(a)$, and quantities\fn~$(n-k) \cdot p(1)$ plotted in Figure \ref{fig:CompL1}$(b)$.
For example, $m=150$ suffices for a 0.01 fractional recovery failure, for $n=300 \sim 1000$ and $k = 6\sim 7$, and for 0.1 fraction then $k = 7 \sim 10$.
We conjecture possible improvment for $\const$.

\ifthenelse{\boolean{dcol}}{ 
\begin{figure}[!t]
	\centering
	  \epsfig{file={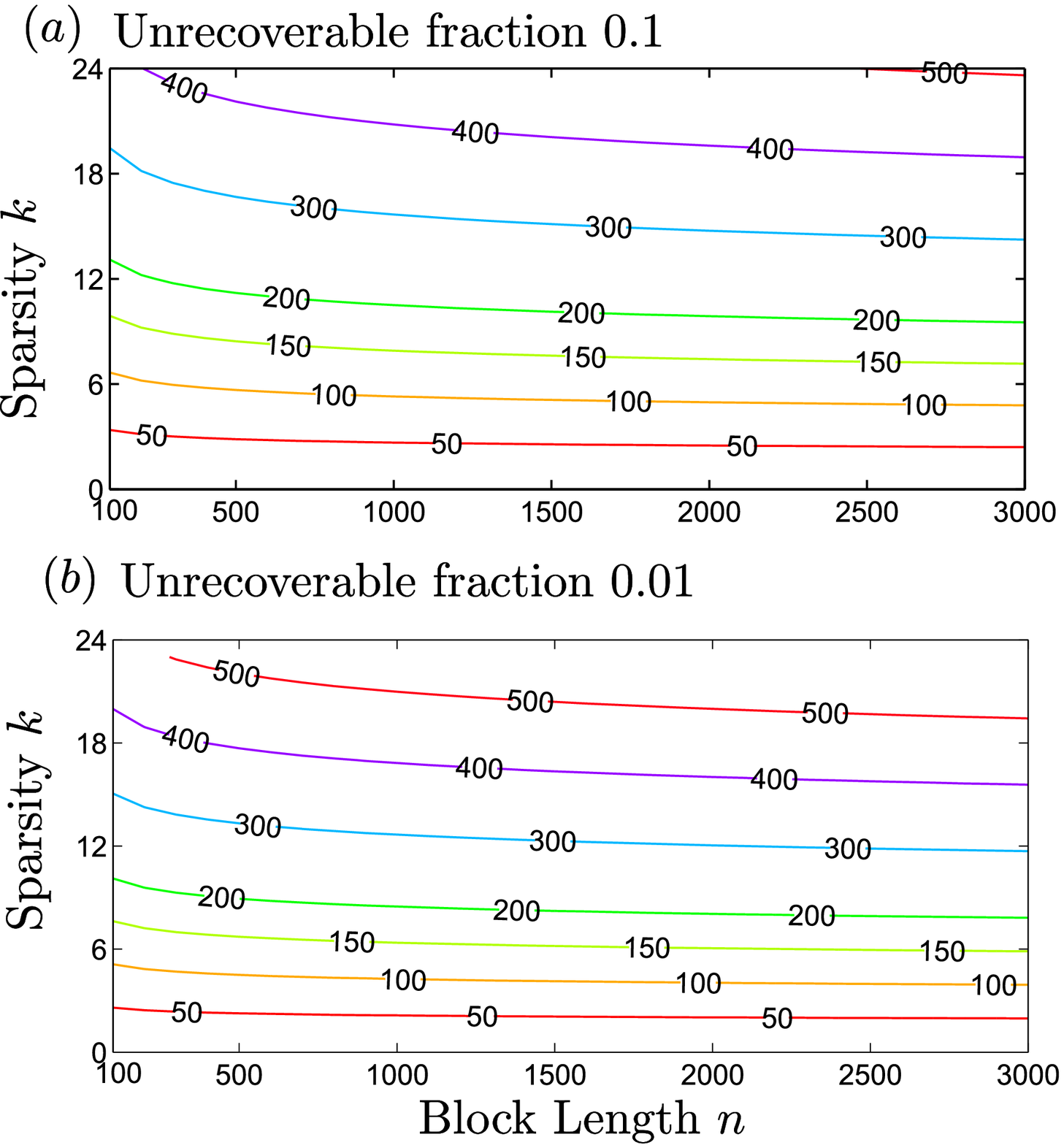},width=.7\linewidth}
		\caption{Measurement rates predicted by equation (\ref{eqn:rate2}), with $\const$ taken to equal $1.8$, required to recover at least $1- 3u = 0.9$ and $0.99$ fractions of sign-subset pairs $(\s,\S)$ (when the signal is exactly $k$-sparse), shown respectively in $(a)$ and $(b)$.}
		\label{fig:MeasRates}
		\vspace*{-10pt}
\end{figure}
}{
\begin{figure}[!t]
	\centering
	  \epsfig{file={MeasRates.eps},width=.4\linewidth}
		\caption{Measurement rates predicted by equation (\ref{eqn:rate2}), with $\const$ taken to equal $1.8$, required to recover at least $1- 3u = 0.9$ and $0.99$ fractions of sign-subset pairs $(\s,\S)$ (when the signal is $k$-sparse), shown respectively in $(a)$ and $(b)$.}
		\label{fig:MeasRates}
		\vspace*{-10pt}
\end{figure}
}

\renewcommand{\fn}{\footnote{We used an assumption that $(n-k)/u$ is suitably larger than $2$, see Supplementary Material \ref{sup:rate}.}}
In the more general setting for \emph{approximately} $k$-sparse signals, we can also have rate (\ref{eqn:rate2}).
To see this, observe that Proposition \ref{pro:proj} also delivers an exponential bound for the \emph{worst-case projections} condition, see (\ref{eqn:g2}).
This is because $\Norm{\pmb{A}_\S^\dagger \pmb{A}_\omega }{1} = \max_{\jp : \; 1 \leq \jp \leq 2^k} |(\pmb{A}_\S^\dagger \pmb{A}_\omega)^T \s_\jp |$, and we take a 
union bound over $2^k$ terms.
Set $a_3 = a_2 \sqrt{k} $, where $a_2$ and $a_3$ respectively correspond to \emph{small projections} and \emph{invertability} conditions.
Then we proceed similarly as before (see Supplementary Material \ref{sup:rate}) to show\fn~that 
the rate for recovering at least $1-5u$ fraction of $(\s,\S)$ pairs 
suffices to be (\ref{eqn:rate2}). The following is the main result summarizing the exposition so far.

\begin{thm} \label{thm:L1}
Let $\Sens$ be an $m \times n$ matrix, where assume $n$ sufficiently large for Theorem \ref{thm:Ustat} to hold. Sample $\Sens = \pmb{A}$ whereby the entries $A_{ij}$ are IID, and are Gaussian or bounded (as stated in Proposition \ref{pro:proj}). 
Then all three conditions in $\ell_1$-recovery guarantee Theorem \BargThm~for $(\s,\S)$ with $|\S|=k$, with the invertability condition taken as $\sigm(\SensS) \geq a_1$ with $a_1 \leq 0.29$. and with $a_3 = a_1 \sqrt{k}$, are satisfied for $u_1 + u_2 + u_3 = 5u$ for some small fraction $u$, if $m$ is on the order of (\ref{eqn:rate2}) with 
$\const =  \max(4/(a_1a_2)^2, 2c_1/(0.29-a_1)^2)$, and $c_1$ depends on the distribution of $A_{ij}$'s. 
Note $\const \geq 4$.

In the exactly $k$-sparse case where only the first 2 conditions are required, this improves to $u_1 + u_2 = 3u$.
\end{thm}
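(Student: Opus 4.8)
The plan is to check each of the three hypotheses of Theorem~\BargThm~in turn, and for each one to (i) bound its failure fraction by a scaled U-statistic using the counting identities of Section~\ref{ssect:count} (Proposition~\ref{cor:count}), (ii) replace that U-statistic by its mean $p(a)$ up to the almost-sure error $\eps_n(a)$ of Theorem~\ref{thm:Ustat}, and (iii) estimate $p(a)$ by Proposition~\ref{pro:proj} together with the minimum singular value tail bound $\Pr\{\sigm(\pmb{A}_\S)\leq a_1\}\leq\exp(-m(0.29-a_1)^2/c_1)$, valid for $m\geq 2k$ and $a_1\leq 0.29$ (\cite{Rudel}, Theorem~5.39), where $c_1$ depends on the entry distribution. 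Throughout, $n$ is assumed large enough for Theorem~\ref{thm:Ustat} to apply. The arithmetic converting these pieces into the form~(\ref{eqn:rate2}) is precisely the computation sketched in Section~\ref{ssect:ell1}, and I would relegate the routine parts to Supplementary Material~\ref{sup:rate}.

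For the \emph{invertability} condition I use $U_n(a_1)$ with kernel $g(\mat{A},a_1)=\Ind{\sigm(\mat{A})>a_1}$, so $u_1\leq U_n(a_1)$ and $p(a_1)=\Pr\{\sigm(\pmb{A}_\S)\leq a_1\}$ is bounded directly by the singular value tail. By~(\ref{eqn:Ubound}), $U_n(a_1)\leq\sqrt{p(a_1)}\,(1+\sqrt{2(k/n)\log(n/k)})$ almost surely; taking logarithms, using $\log(1+\alpha)\leq\alpha$, and asking the result to be at most $u$ forces $m$ to satisfy~(\ref{eqn:rate2}) with $\const=2c_1/(0.29-a_1)^2$, so $u_1\leq u$. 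For \emph{small projections}, Proposition~\ref{cor:count} gives $u_2\leq(n-k)U_n(a_2)$ for the kernel~(\ref{eqn:g3}), whose mean is $p(a_2)=2^{-k}\sum_{\jp}\Pr\{|(\pmb{A}_\S^\dagger\pmb{A}_\omega)^T\s_\jp|>a_2\}$. Applying Proposition~\ref{pro:proj} with its free parameter set to $a_1^2$ bounds each summand, hence $p(a_2)$, by $2\exp(-m(a_1a_2)^2/(2k))+\Pr\{\sigm(\pmb{A}_\S)\leq a_1\}$. Bounding $\sqrt{p(a_2)}$ by the sum of the square roots of the two terms, inserting the concentration correction of Theorem~\ref{thm:Ustat} as in~(\ref{eqn:Ubound}), and asking each of the two resulting contributions to $(n-k)U_n(a_2)$ to be at most $u$, forces~(\ref{eqn:rate2}) with $\const=\max(4/(a_1a_2)^2,\,2c_1/(0.29-a_1)^2)$; this gives $u_2\leq 2u$.

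At this point the exactly $k$-sparse case is done: since $\Norm{\matt{\x}-\xbk}{1}=0$, the \emph{worst-case projections} hypothesis of Theorem~\BargThm~holds vacuously for any $a_3$, so only the first two conditions are needed and $u_1+u_2\leq u+2u=3u$. For approximately $k$-sparse signals I additionally bound $u_3\leq(n-k)U_n(a_3)$ for the kernel~(\ref{eqn:g2}). Writing $\Norm{\pmb{A}_\S^\dagger\pmb{A}_\omega}{1}=\max_{1\leq\jp\leq 2^k}|(\pmb{A}_\S^\dagger\pmb{A}_\omega)^T\s_\jp|$ and taking a union bound over the $2^k$ sign vectors, Proposition~\ref{pro:proj} with parameter $a_1^2$ and $a_3=a_1\sqrt{k}$ bounds $p(a_3)$ by $2^k$ times the same two-term expression; the factor $2^k=e^{k\log 2}$ is absorbed into the exponentials at the cost of only an $O(1)$ change in the constant, using $k\geq 1$ and the mild assumption that $(n-k)/u$ is comfortably larger than $2$. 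Concentrating once more yields $u_3\leq 2u$ once $m$ obeys~(\ref{eqn:rate2}) with the stated $\const$, so $u_1+u_2+u_3\leq 5u$. Finally $\const\geq 4$ because $a_1\leq 0.29<1$ and $a_2<1$ give $(a_1a_2)^2<1$, hence $4/(a_1a_2)^2>4$.

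I expect the main obstacle to be the constant bookkeeping in the last two steps rather than anything conceptual: the square root coming from Theorem~\ref{thm:Ustat}, the splitting of $p(a)$ into two terms, and the $2^k$ union bound in the worst-case projections condition must all be controlled so that the $k\log((n-k)/u)$ rate is preserved and the constant does not exceed $\max(4/(a_1a_2)^2,2c_1/(0.29-a_1)^2)$. Checking that the $2^k$ factor truly disappears into the exponent (which hinges on $(n-k)/u$ not being too small) and tracking the factor-of-$2$ losses from the $\sqrt{\,\cdot\,}$ estimates are the only places where care is required; everything else is a direct reuse of Propositions~\ref{cor:count} and~\ref{pro:proj} and Theorem~\ref{thm:Ustat}.
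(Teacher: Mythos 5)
Your proposal is correct and follows essentially the same route as the paper: bounding each failure fraction by a (scaled) U-statistic via Proposition~\ref{cor:count}, concentrating it with Theorem~\ref{thm:Ustat} and the square-root estimate~(\ref{eqn:Ubound}), controlling the means with Proposition~\ref{pro:proj} together with the minimum-singular-value tail bound, and absorbing the $2^k$ union bound for the worst-case projections condition into the exponent via the $\sqrt{k}$ scaling of $a_3$ under the assumption that $(n-k)/u$ exceeds $2$. This is exactly the argument of Section~\ref{ssect:ell1} and Supplementary Material~\ref{sup:rate}, including the factor-of-two losses in $\const$ and the $u+2u+2u=5u$ (respectively $3u$) accounting.
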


We end this subsection with two comments on the rate (\ref{eqn:rate2})  derived here for ``average-case'' analysis.
Firstly (\ref{eqn:rate2}) is
very similar to that of $k \log (n/k)$ for ``worst-case'' analysis. 
This justifies the counting employed in previous Subsection \ref{ssect:count}, Proposition \ref{cor:count}, and is reassuring since we know that ``worst-case'' analysis provides the optimal rate~\cite{Candes,Donoho}.
Secondly to have (\ref{eqn:rate2}) hold for the approximately $k$-sparse case, we lose a factor of $\sqrt{k}$ in the error estimate $\Norm{\matt{\x}^*_\S - \matt{\x}_\S}{1}$, as compared to ``worst-case'' Theorem \RIPthm.
This is because we need to set $a_3 = a_2 \sqrt{k} $, as mentioned in the previous paragraph.
However, the ``average-case'' analysis here achieves our primary goal, that is to predict well for system sizes $k,m,n$ when ``worst-case'' analysis becomes too pessimistic.

\renewcommand{\Bas}{\mat{C}}
\newcommand{\am}{\delta_{\scriptsize \mbox{\upshape min}}}
\newcommand{\aM}{\delta_{\scriptsize \mbox{\upshape max}}}
\newcommand{\eeigm}{\varsigma_{\scriptsize \mbox{\upshape min}}}
\newcommand{\eeigM}{\varsigma_{\scriptsize \mbox{\upshape max}}}

\ifthenelse{\boolean{dcol}}{ 
\begin{figure}[!t]
	\centering
	  \epsfig{file=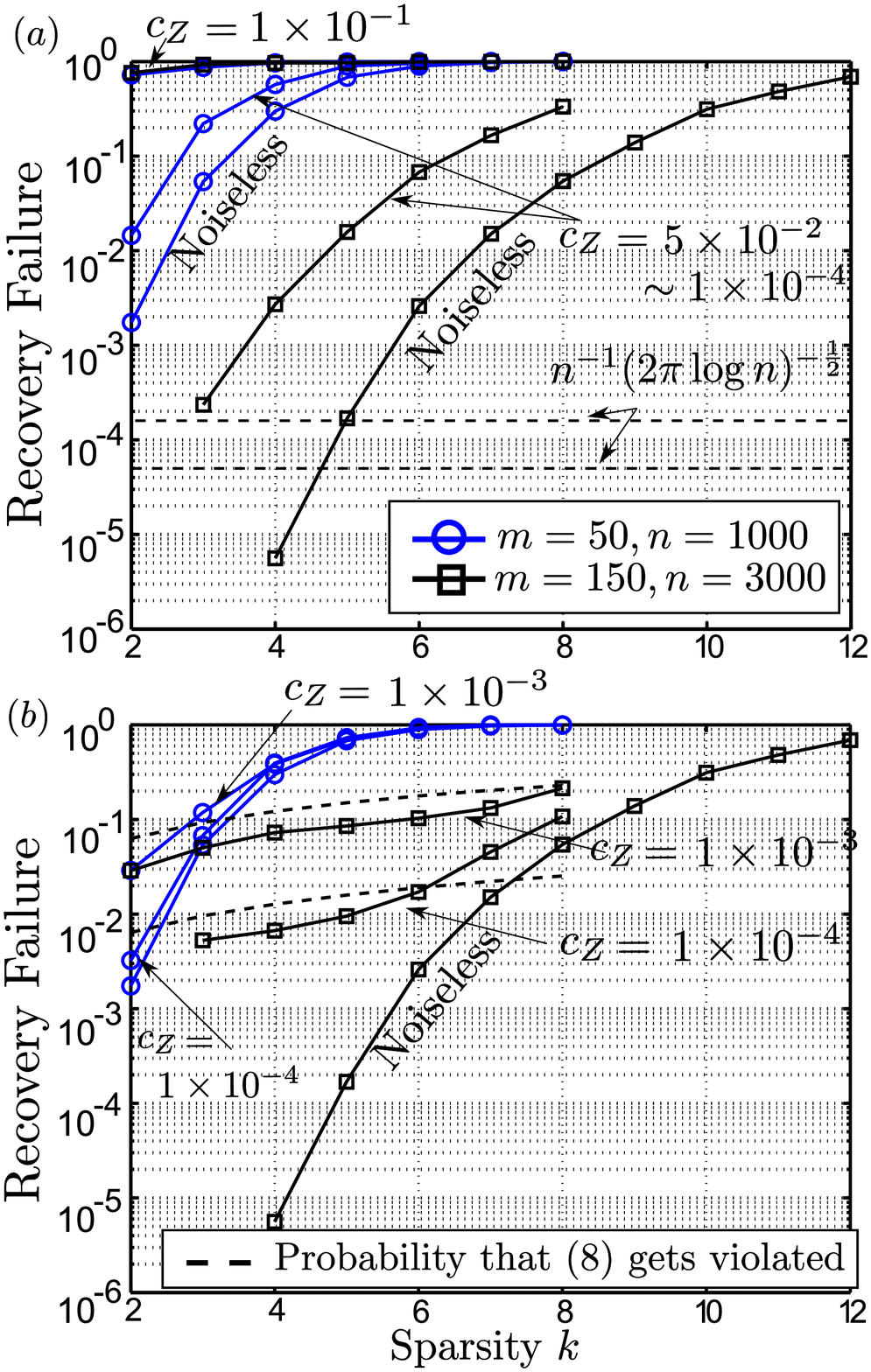,width=.7\linewidth}
		\caption{Empirical LASSO recovery performance, Bernoullli case. In $(a)$ the non-zero signal magnitudes $|\x_i|$ equal 1, and in $(b)$ they are in $\ZO$. Noise variances denoted $\noisesd^2$.}
		\label{fig:Lasso}
		\vspace*{-10pt}
\end{figure}
}{
\begin{figure}[!t]
	\centering
	  \epsfig{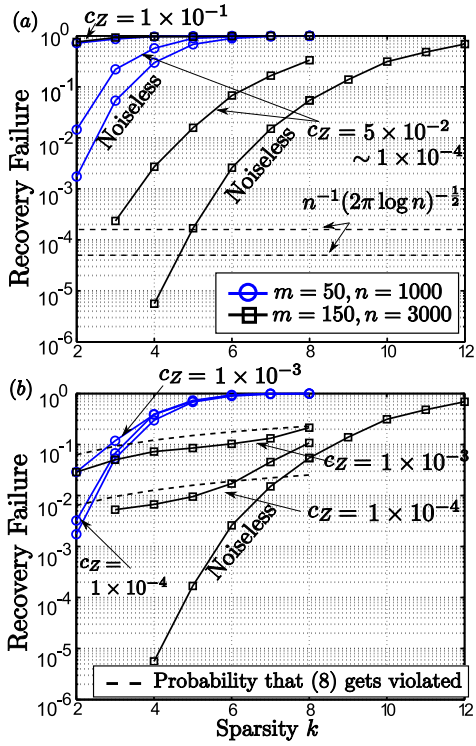}
		\caption{Empirical LASSO recovery performance, Bernoullli case. In $(a)$ the non-zero signal magnitudes $|\x_i|$ equal 1, and in $(b)$ they are in $\ZO$. Noise variances denoted $\noisesd^2$.}
		\label{fig:Lasso}
		\vspace*{-10pt}
\end{figure}
}

\subsection{Rate analysis for LASSO (Theorem \CandThm)} \label{ssect:Lasso}

Next we move on to the LASSO estimate of~\cite{Cand2008}. 
Recall from (\ref{eqn:Lasso}) that the regularizer depends on the noise standard deviation $\noisesd$, and the term $\Reg=(1+a)\sqrt{2 \log n }$ that depends on block length $n$ and some non-negative constant $a$ that we set. 
This constant $a$ impacts performance~\cite{Cand2008}. 
For matrices with Bernoulli entries, Figure \ref{fig:Lasso} shows recovery failure rates for two data sets $m=50, n=1000$ and $m=150, n=1000$; the sparsity patterns (sign-subset pairs $(\s,\S)$) were chosen at random, and failure rates are shown for various sparsity values $k$, and noises $\noisesd$.
In Figure \ref{fig:Lasso}$(a)$ we set $a=0$, and in $(b)$ we set $a=1$.  
Also, in $(a)$ the non-zero signal magnitudes $|\x_i|$ are in $\{1,-1\}$, and in $(b)$ they are in $\ZO$.
The performances are clearly different.
``Threshold-like'' behavior is seen in $(a)$ for both data sets, whereby the performances stay the same for $\noisesd $ in the range $  5 \times 10^{-2} \sim 1 \times 10^{-4}$, and then catastrophically failing for $\noisesd = 1 \times 10^{-1}$. 
However in $(b)$, for various $\noisesd$ the performances seem to be limited by a ``noise-floor''.
We see that in the noiseless limit (more specifically when $\noisesd \rightarrow 0$), the performances become the same. 
In this subsection, we apply U-statistics on the various conditions of Theorem \CandThm, in particular
the \emph{invertability} and \emph{small projections} conditions have already been discussed in the previous subsection.
We account for the observations in Figure \ref{fig:Lasso}.

In the noiseless limit, the previously derived rate (\ref{eqn:rate2}) holds. 
Here, the regularizer in (\ref{eqn:Lasso}) becomes so small that $a$ (equivalently $\Reg$) does not matter. As mentioned in~\cite{Fuchs}, LASSO then becomes equivalent to $\ell_1$-minimization (\ref{eqn:L1}), hence the (noiseless) performances in Figures \ref{fig:Lasso}$(a)$ and $(b)$ are the same.
That is, in this special case the rate (\ref{eqn:rate2}) suffices to recover at least $1-3u$ fraction of $(\s,\S)$.
To test, take $k = 4$, $n=3000$, and fraction $1-3u = 1- 6 \times 10^{-6}$, and with $\const = 1.8$ gives $153$, close to $m$ here which is set to $150$.

In the noisy case, we are additionally concerned with the noise conditions i) and ii), conditions (\ref{eqn:Cand1}) and (\ref{eqn:Cand2}), and \emph{invertability projections}. 
Recall that the noise conditions are satisfied with probability $1- n^{-1} (2\pi \log n)^{-\half}$, that goes to 1 superlinearly~\cite{Cand2008} (Proposition \ref{pro:noise}, Supplementary Material \ref{app:recov}).
The remaining conditions are influenced by the value $a$ set in the $\Reg$ regularization term in (\ref{eqn:Lasso}).

In condition (\ref{eqn:Cand1}), the value $a$ sets the maximal value for $a_2$ (when $a=0$ then $a_2< 0.2929$, and when $a=1$ then $a_2< 0.6464$).
This affects the \emph{small projections} condition, to which constant $a_2$ belongs, which in turn affects performance.
However from a rate standpoint (\ref{eqn:rate2}) still holds, only now 
the value of $\const$ (which has the term $4/(a_2^2\RIC)$) becomes larger. 

In condition (\ref{eqn:Cand2}), the value $a$ affects the size of the term $a_1^{-1} + 2 a_3(1+a)$.
The larger $a$ is, 
the more often (\ref{eqn:Cand2}) fails to satisfy.
Here there are two constants $a_1$ and $a_3$. 
Recall $a_1$ belongs to the \emph{invertability} condition discussed in the previous subsection, which holds with rate (\ref{eqn:rate2}) with $\const =  2c_1/(0.29-a_1)^2$ and $a_1 \leq 0.29$. 
Consider the case where the non-zero signal magnitudes $|\x_i|$ are independently drawn from $\ZO$. Then we observe $(\min_{i \in \S } |\x_i|) < t$ with probability $1 - (1-t)^k$ where $t \in \ZO$ and $|\S|=k$. 
For $t$ set equal to the RHS of (\ref{eqn:Cand2}), this gives the probability that condition (\ref{eqn:Cand2}) fails.
Figure \ref{fig:Lasso}$(b)$ shows good empirical match when setting $a_1 = 0.29$ and $a_3 = 1$, where the dotted curves predict the ``error-floors'' for various $k$, measurements $m=50$ and $m=150$, and noise $\noisesd$.  
In the other case where $|\x_i| =1$ (as in Figure \ref{fig:Lasso}$(a)$), condition (\ref{eqn:Cand2}) remains un-violated as long as $\noisesd$ (and $a_1,a_3,n$) allow the RHS to be smaller than 1. 
Figure \ref{fig:Lasso}$(a)$ suggests that for the appropriate choices for $a_1,a_3$, condition (\ref{eqn:Cand2}) is always un-violated when $\noisesd \leq 5 \times 10^{-2}$, and violated when $\noisesd \geq 1 \times 1^{-1}$.
For more discussion on noise effects see Supplementary Material \ref{sup:Lasso}.


%


\renewcommand{\fn}{\footnote{For $m \geq 2k$ we have $\Pr\{\sigM(\pmb{A}) >   1.71 + t \} \leq \Pr\{\sigM(\pmb{A}) > 1 + \sqrt{k/m} + t \}\leq \exp(-m t^2/c_1)$ for some $c_1$, see~\cite{Rudel}, Theorem 5.39.}}

The constant $a_3$ belongs to the remaining \emph{invertability projections} condition.
The fraction $u_3$ of size-$k$ subsets \emph{failing} the \emph{invertability projections} condition for some $a_3$, can be addressed using U-statistics. 
Consider the bounded kernel $g : \Real^{m\times k} \times \Real \rightarrow \ZO$, set as 
\bea
 g(\mat{A},a) = \frac{1}{2^k}\sum_{\jp=1}^{2^k} \Ind{ (\mat{A}^T\mat{A})^\dagger \s_\jp > a} \label{eqn:g4}
\eea
where $\s_\jp\in \{-1,1\}^k$ and $(\mat{A}^T\mat{A})^\dagger$ is the pseudoinverse of $\mat{A}^T\mat{A}$.
Then $u_3 = U_n(a_3)$, and as before Theorem \ref{thm:Ustat} guarantees the upper bound (\ref{eqn:Ubound}), which depends on
$p(a_3)$ where $p(a_3) = \E g(\pmb{A}_\S,a_3)$.

We go on to discuss a bound on $p(a_3)$ under some general conditions. 
In~\cite{Cand2008}, analysis on $p(a_3)$ (see Lemma 3.5) requires $\eigM(\pmb{A}_\S)  \leq 1.5$, a condition not explicitly required in Theorem \CandThm. 
Also, empirical evidence suggests not to assume that $\eigM(\pmb{A}_\S)  \leq 1.5$.
For $m=150$ and $k=5$ we see from Figure \ref{fig:Lasso} that (in the noiseless limit) the \emph{failure} rate is on the order of $1 \times 10^{-4}$, but in Figure \ref{fig:Gauss}$(b)$ we see $\eigM(\pmb{A}_\S)  > 1.5$ occurs with much larger fraction 0.1. 
Hence we take a different approach.
Using ideas behind Bauer's generalization of \emph{Wielandt's inequality}~\cite{Householder1965}, 
the following proposition allows $\eigM(\pmb{A}_\S)$ to arbitrarily exceed 1.5.
Also, it does not assume any particular distribution on entries of $\pmb{A}$.

{\newcommand{\Ev}{\mathcal{E}}
\begin{pro} \label{pro:invproj}

Let $\S$ be a size-$k$ subset. Assume $k\geq 2$. 
Let $\pmb{A}_\S$ be an $k \times n$ random matrix. 
Let $\am,\aM$ be some positive constants.
For any sign vector $\s$ in $\{-1,1\}^k$, we have
\ifthenelse{\boolean{dcol}}{ 
\begin{align}
&\Pr\left\{ \Norm{(\pmb{A}_\S^T\pmb{A}_\S)^\dagger\s}{\infty}  > \frac{(\sqrt{k} +1 ) \cdot| \tau_k -1 |}{\am^2\cdot(\tau_k + 1)} \cdot  \right\} \nn
&~~~~~~~\leq~  
  \Pr\{ \Ev_c(\am,\aM)   \} \label{eqn:invproj0}
\end{align}}
{\bea
\Pr\left\{ \Norm{(\pmb{A}_\S^T\pmb{A}_\S)^\dagger\s}{\infty}  > \frac{(\sqrt{k} +1 ) \cdot| \tau_k -1 |}{\am^2\cdot(\tau_k + 1)} \cdot  \right\} 
\leq  
  \Pr\{ \Ev_c(\am,\aM)   \} \label{eqn:invproj0}
\eea}
where $\Ev(\am,\aM) = \{ \am \leq \sigm(\pmb{A}_\S) \leq \sigM(\pmb{A}_\S) \leq \aM \}$, and $\Ev_c(\am,\aM)$ is the complementary event of $\Ev(\am,\aM)$, and the constant $\tau_k$ satisfies
\bea
	\tau_k = \tau_k(\aM,\am) =  \left(\frac{\aM }{\am}\right)^2 \cdot  \frac{ 1 + k^{-\half}}{1 -  k^{-\half}} . \label{eqn:invprojtau}
\eea
\end{pro}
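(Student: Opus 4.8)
### Proof Proposal for Proposition~\ref{pro:invproj}

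\textbf{The plan.} The statement conditions on the ``good'' event $\Ev(\am,\aM)$, on which all singular values of $\pmb{A}_\S$ lie in $[\am,\aM]$, and claims an $\ell_\infty$ bound on $(\pmb{A}_\S^T\pmb{A}_\S)^\dagger\s$ that holds \emph{deterministically} on that event; the probabilistic content then reduces to bounding the complement $\Pr\{\Ev_c\}$. So the real work is a deterministic linear-algebra estimate: assuming $\am^2 \le \sigma^2 \le \aM^2$ for every eigenvalue $\sigma^2$ of the $k\times k$ positive-definite matrix $\pmb{G} := \pmb{A}_\S^T\pmb{A}_\S$, and $\s\in\{-1,1\}^k$, show
\[
  \Norm{\pmb{G}^{-1}\s}{\infty} \;\le\; \frac{(\sqrt{k}+1)\,|\tau_k-1|}{\am^2\,(\tau_k+1)}.
\]
First I would reduce the $\ell_\infty$ norm to a bilinear form: for any coordinate $i$, $(\pmb{G}^{-1}\s)_i = \mat{e}_i^T\pmb{G}^{-1}\s$, and a sign vector $\s$ has $\Norm{\s}{2}=\sqrt{k}$, $\Norm{\mat{e}_i}{2}=1$. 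The naive bound $|\mat{e}_i^T\pmb{G}^{-1}\s|\le \Norm{\pmb{G}^{-1}}{2}\Norm{\s}{2} = \sqrt{k}/\am^2$ is too weak (it has no $|\tau_k-1|/(\tau_k+1)$ contraction factor), so the point is to exploit that $\mat{e}_i$ and $\s$ are \emph{not aligned} when $k\ge 2$ — this is exactly where Bauer's generalization of Wielandt's inequality enters.

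\textbf{Key steps.} (1) Decompose $\s = (\mat{e}_i^T\s)\,\mat{e}_i + \s^\perp$ where $\s^\perp \perp \mat{e}_i$; note $\mat{e}_i^T\s = \pm 1$ and $\Norm{\s^\perp}{2}^2 = k-1$, so $\Norm{\s^\perp}{2}/\Norm{\mat{e}_i}{2} = \sqrt{k-1}$. (2) Then $\mat{e}_i^T\pmb{G}^{-1}\s = (\mat{e}_i^T\s)(\mat{e}_i^T\pmb{G}^{-1}\mat{e}_i) + \mat{e}_i^T\pmb{G}^{-1}\s^\perp$; bound the diagonal term by $\Norm{\pmb{G}^{-1}}{2} \le \am^{-2}$, and apply the Wielandt-type inequality to the cross term. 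Bauer–Wielandt (see~\cite{Householder1965}) states that for Hermitian positive-definite $\pmb{G}$ with eigenvalue range $[\am^2,\aM^2]$ and vectors $\mat{u}\perp\mat{v}$, $|\mat{u}^T\pmb{G}^{-1}\mat{v}| \le \frac{\aM^2/\am^2 - 1}{\aM^2/\am^2 + 1}\cdot\frac{\Norm{\mat{u}}{2}\Norm{\mat{v}}{2}}{\text{(appropriate normalization)}}$; applied with $\mat{u}=\mat{e}_i$, $\mat{v}=\s^\perp$ this yields a factor $\frac{\aM^2/\am^2-1}{\aM^2/\am^2+1}\cdot\frac{\sqrt{k-1}}{\am^2}$. (3) Combine the two pieces. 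The definition~(\ref{eqn:invprojtau}), $\tau_k = (\aM/\am)^2\cdot\frac{1+k^{-1/2}}{1-k^{-1/2}}$, is engineered so that after bounding $\sqrt{k-1} \le \sqrt{k}$ and folding the ``$+1$'' from the diagonal term into the numerator, the ratio $\frac{\aM^2/\am^2-1}{\aM^2/\am^2+1}$ combined with the $\frac{1+k^{-1/2}}{1-k^{-1/2}}$ correction collapses exactly to $\frac{\tau_k-1}{\tau_k+1}$ times $(\sqrt{k}+1)/\am^2$; I would verify this algebraic identity carefully, since it is the crux. Taking $\max_i$ gives the deterministic $\ell_\infty$ bound on $\Ev(\am,\aM)$. (4) Finally, $\Pr\{\Norm{\pmb{G}^{-1}\s}{\infty} > \text{(bound)}\} \le \Pr\{\Ev_c(\am,\aM)\}$ because on $\Ev(\am,\aM)$ the event inside the left probability cannot occur; $|\tau_k-1|$ rather than $\tau_k-1$ appears only because one should not presuppose $\tau_k>1$, though $\aM\ge\am$ and $k\ge 2$ in fact force it.

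\textbf{Main obstacle.} The delicate point is getting the \emph{exact} constant: the Bauer–Wielandt inequality must be invoked in the right normalization (it is usually stated for the quadratic-form ratio or as an angle bound, and there are several equivalent forms), and the bookkeeping that turns $\Norm{\pmb{G}^{-1}}{2} + (\text{Wielandt factor})\cdot\Norm{\pmb{G}^{-1}}{2}\sqrt{k-1}$ into exactly $\frac{(\sqrt{k}+1)|\tau_k-1|}{\am^2(\tau_k+1)}$ relies on the specific $k$-dependent fudge factor $\frac{1+k^{-1/2}}{1-k^{-1/2}}$ built into $\tau_k$. I expect that replacing $\sqrt{k-1}$ by $\sqrt{k}-1$ (not $\sqrt k$) and matching denominators is where the $(1\pm k^{-1/2})$ terms are absorbed, so the identity to check is roughly $1 + \frac{\rho-1}{\rho+1}(\sqrt k - 1) = (\sqrt k + 1)\cdot\frac{\rho\frac{1+k^{-1/2}}{1-k^{-1/2}} - 1}{\rho\frac{1+k^{-1/2}}{1-k^{-1/2}} + 1}$ with $\rho = (\aM/\am)^2$; confirming this (or the correct variant) is the one genuinely non-routine computation. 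Everything else — the orthogonal decomposition, the operator-norm bounds, and the final conditioning argument — is standard.
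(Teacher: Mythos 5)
Your overall architecture is the same as the paper's: establish a deterministic bound on $\Norm{(\pmb{A}_\S^T\pmb{A}_\S)^{-1}\s}{\infty}$ that holds whenever all singular values of $\pmb{A}_\S$ lie in $[\am,\aM]$, so that the probabilistic claim reduces to $\Pr\{\mathcal{E}_c\}$. The linear-algebra step, however, is genuinely different. The paper does \emph{not} orthogonalize $\mat{e}_i$ against $\s$: it forms the $k\times 2$ matrix $\mat{C}=[\,k^{-\half}\s,\ \mat{e}_i\,]$, compresses to the $2\times 2$ positive definite matrix $\mat{C}^T(\pmb{A}_\S^T\pmb{A}_\S)^{-1}\mat{C}$ whose off-diagonal entry is exactly $k^{-\half}\bigl((\pmb{A}_\S^T\pmb{A}_\S)^{-1}\s\bigr)_i$, and invokes a Kantorovich-type bound (Householder, eq.~(11)) controlling the off-diagonal entry of a $2\times2$ SPD matrix by $\sqrt{X'_{11}X'_{22}}\cdot\frac{t-1}{t+1}$ in terms of its condition number $t$. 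Because the two columns of $\mat{C}$ have normalized inner product $\pm k^{-\half}$, the condition number of $\mat{C}^T\mat{C}$, namely $\frac{1+k^{-\half}}{1-k^{-\half}}$, multiplies $(\aM/\am)^2$ -- that is exactly where $\tau_k$ comes from -- and $\sqrt{X'_{11}X'_{22}}\le(1+k^{-\half})/\am^2$ supplies the $(\sqrt{k}+1)/\am^2$ prefactor. Your decomposition $\s=(\mat{e}_i^T\s)\mat{e}_i+\s^\perp$ with Wielandt applied to a genuinely orthogonal pair buys a cleaner invocation of the inequality at the cost of an additive "$1$" from the diagonal term; it is a legitimate alternative.

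The one real problem is that your declared crux -- verifying the algebraic \emph{identity} that matches your bound to the stated constant -- will fail, because no such identity holds. Your route gives $\bigl|\bigl((\pmb{A}_\S^T\pmb{A}_\S)^{-1}\s\bigr)_i\bigr|\le \am^{-2}\bigl(1+\frac{\kappa-1}{\kappa+1}\sqrt{k-1}\bigr)$ with $\kappa=(\aM/\am)^2$; at $\kappa=1$ this equals $\am^{-2}$ while the proposition's constant equals $\am^{-2}(1+k^{-\half})$, so the two expressions are not equal and the identity you wrote down is false. What is true, and what you must prove instead, is the one-sided inequality
\begin{align}
1+\tfrac{\kappa-1}{\kappa+1}\sqrt{k-1}\;\le\;(\sqrt{k}+1)\,\tfrac{\tau_k-1}{\tau_k+1}.
\end{align}
Writing $u=\frac{\kappa-1}{\kappa+1}\in[0,1)$ and $s=\sqrt{k}$, one computes $(\sqrt{k}+1)\frac{\tau_k-1}{\tau_k+1}=\frac{(s+1)(1+us)}{s+u}$, and $(s+u)$ times the difference of the two sides is $(1-u)+u\bigl[s(s+1)-(s+u)\sqrt{s^2-1}\bigr]\ge 0$ since $u\le 1$ and $\sqrt{s^2-1}\le s$. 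With that substitution your argument closes and in fact yields a slightly sharper constant than the one stated; without it, the proof as written stalls at a non-identity.
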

}

\renewcommand{\fn}{\footnote{For $m \geq 2k$ we have $\Pr\{\sigM(\pmb{A}) >   1.71 + t \} \leq \Pr\{\sigM(\pmb{A}) > 1 + \sqrt{k/m} + t \}\leq \exp(-m t^2/c_1)$ for some $c_1$, see~\cite{Rudel}, Theorem 5.39.}}

We defer the proof for now.
If $\pmb{A}_\S^T\pmb{A}_\S$ is ``almost'' an identity matrix, then we expect $\Norm{(\pmb{A}_\S^T\pmb{A}_\S)^{-1}\matt{\beta}}{\infty} \approx 1$ for any sign vector $\matt{\beta}$ (hence our above hueristic whereby we set $a_3=1$). 
Proposition \ref{pro:invproj} makes a slightly weaker (but relatively general) statement. 
Now for some appropriately fixed $\aM$ and $\am$, we expect $\Pr\{\mathcal{E}_c(\am,\aM)\}$ in (\ref{eqn:invproj0}) to drop exponentially in $m$.
Just as the term $\Pr\{ \sigm(\A_\S) \leq \am\}$ in Proposition \ref{pro:proj} can be bounded by $\exp(-m\cdot (0.29-\am)^2/c_1)$, 
we can bound\fn~$\Pr\{\sigM(\pmb{A}) >  \aM \} \leq \exp(-m ( \aM - 1.71 )^2/c_1)$ for some $\aM \geq 1.71$.
Roughly speaking, $\sigM(\A_\S) \leq 1.71$ (or $\eigM(\A_\S) \leq 2.92$) with ``high probability''.
We fix $\am = a_1$, where $a_1$ belongs to the \emph{invertability} condition.

So to bound $p(a_3)$, both (\ref{eqn:g4}) and Proposition \ref{pro:invproj} imply $p(a_3) \leq \Pr\{\mathcal{E}_c(\am,\aM)\}$ for $a_3  = (\sqrt{k}+1) \cdot |\tau_k - 1| / (\am^2\cdot (\tau_k+1)) $. 
Now $\Pr\{\mathcal{E}_c(\am,\aM)\} \leq 2 \exp(-m \cdot t^2/c_1) $, where we set $t = \aM - 1.71  = 0.29-a_1 $ and $\am = a_1$.
By (\ref{eqn:Ubound}), the rate (\ref{eqn:rate2}) suffices to ensure $u_3 = U_n(a_3) \leq u$ for some fraction $u$,  
with the same $\const$.
Thus we proved the other main theorem, similar to Theorem \ref{thm:L1}.

\begin{thm} \label{thm:Lasso}
Let $\Sens$ be an $m \times n$ matrix, , where assume $n$ sufficiently large for Theorem \ref{thm:Ustat} to hold. Sample $\Sens = \pmb{A}$ whereby the entries $A_{ij}$ are IID, and are Gaussian or bounded (as stated in Proposition \ref{pro:proj}). 
Then all three \textit{invertability}, \textit{small projections}, and \textit{invertability projections} conditions in LASSO Theorem \CandThm~for $(\s,\S)$ with $|\S|=k \geq 2$, with $a_1 \leq 0.29$, with $a_2$ satisfying (\ref{eqn:Cand1}) for some $a$ set in the regularizer $\Reg$, and with $a_3  = (\sqrt{k}+1) \cdot |\tau_k - 1| / (a_1^2\cdot (\tau_k+1)) $ for $\tau_k=\tau_k(1.42-a_1,a_1)$ in (\ref{eqn:invprojtau}),
are satisfied for $u_1 + u_2 + u_3 = 4u$ for some small fraction $u$, if $m$ is on the order of (\ref{eqn:rate2}) with 
$\const =  \max(4/(a_1a_2)^2, 2c_1/(0.29-a_1)^2)$, and $c_1$ depends on the distribution of $A_{ij}$'s. Note $\const \geq 4$.

In the noiseless limit where only the first 2 conditions are required, this improves to $u_1 + u_2 = 3u$.
\end{thm}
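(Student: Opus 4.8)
The plan is to mirror the proof of Theorem \ref{thm:L1} as closely as possible, recycling the analysis of Subsection \ref{ssect:ell1} for the \emph{invertability} and \emph{small projections} conditions and supplying one genuinely new ingredient, Proposition \ref{pro:invproj}, for the \emph{invertability projections} condition. I would split into two regimes. In the noiseless limit the regularizer in (\ref{eqn:Lasso}) vanishes, LASSO coincides with $\ell_1$-minimization (\ref{eqn:L1}) in the sense of~\cite{Fuchs}, so only the first two conditions of Theorem \CandThm{} are in force and the bound $u_1+u_2\leq 3u$ is exactly the exactly-$k$-sparse conclusion of Theorem \ref{thm:L1}. In the noisy case the extra \emph{invertability projections} condition adds a further $u_3\leq u$, for a total of $4u$; the accompanying noise conditions i), ii), (\ref{eqn:Cand1}) and the magnitude condition (\ref{eqn:Cand2}) of Theorem \CandThm{} are hypotheses carried along (the first two holding with probability $1-n^{-1}(2\pi\log n)^{-1/2}$, cf.\ Proposition \ref{pro:noise}), not objects to be re-established here.

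For the \emph{invertability} condition I would take the U-statistic $U_n(a_1)$ with kernel $g(\mat{A},a_1)=\Ind{\sigm(\mat{A})>a_1}$, so $u_1=U_n(a_1)$; Theorem \ref{thm:Ustat} in the form (\ref{eqn:Ubound}) reduces matters to making $p(a_1)=\Pr\{\sigm(\pmb{A}_\S)\leq a_1\}$ small, and the extreme-singular-value concentration bound $p(a_1)\leq\exp(-m(0.29-a_1)^2/c_1)$ (valid for $m\geq 2k$ and $a_1\leq 0.29$; cf.~\cite{Rudel}, Thm.~5.39) forces $u_1\leq u$ once $m$ obeys (\ref{eqn:rate2}) with $\const\geq 2c_1/(0.29-a_1)^2$. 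For \emph{small projections} I would invoke the counting identity of Subsection \ref{ssect:count} (Proposition \ref{cor:count}, giving $u_2\leq (n-k)U_n(a_2)$ for the kernel (\ref{eqn:g3}) on size-$(k+1)$ blocks), then Proposition \ref{pro:proj} with its free parameter chosen so that its second term is again $\Pr\{\sigm(\pmb{A}_\S)\leq a_1\}$; bounding $\sqrt{p(a_2)}$ by the sum of square roots of the two terms of (\ref{eqn:proj0}) and tracking exponents shows the first term demands $\const\geq 4/(a_1a_2)^2$ and the second $\const\geq 2c_1/(0.29-a_1)^2$. This is verbatim the derivation of (\ref{eqn:rate2}) in Subsection \ref{ssect:ell1}, so $u_2\leq 2u$.

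The only new step is the \emph{invertability projections} condition. I would use the kernel (\ref{eqn:g4}), for which $u_3=U_n(a_3)$ directly (no counting is needed, the condition being imposed per size-$k$ subset), and apply Theorem \ref{thm:Ustat}/(\ref{eqn:Ubound}) to reduce to bounding $p(a_3)=\E g(\pmb{A}_\S,a_3)$. Proposition \ref{pro:invproj}, applied with $\am=a_1$ and $\aM$ a fixed constant above the almost-sure ceiling ($\approx 1.71$) of $\sigM(\pmb{A}_\S)$ --- this fixes $\tau_k$ via (\ref{eqn:invprojtau}) and hence fixes the admissible $a_3=(\sqrt k+1)\,|\tau_k-1|/(a_1^2(\tau_k+1))$ appearing in the theorem --- gives $p(a_3)\leq\Pr\{\mathcal{E}_c(a_1,\aM)\}\leq\Pr\{\sigm(\pmb{A}_\S)\leq a_1\}+\Pr\{\sigM(\pmb{A}_\S)>\aM\}$, and both summands decay like $\exp(-m(0.29-a_1)^2/c_1)$ by two-sided extreme-singular-value concentration. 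Feeding this back through (\ref{eqn:Ubound}) (square-root each term, use $\log(1+\alpha)\leq\alpha$ to absorb the $\sqrt{2(k/n)\log(n/k)}$ correction) yields $u_3\leq u$ under the same rate (\ref{eqn:rate2}) with the same $\const=\max(4/(a_1a_2)^2,2c_1/(0.29-a_1)^2)$. Summing gives $u_1+u_2+u_3\leq 4u$, and dropping $u_3$ in the noiseless limit gives $3u$.

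The main obstacle --- and the reason Subsection \ref{ssect:Lasso} devotes a whole proposition to it --- is controlling $\Norm{(\pmb{A}_\S^T\pmb{A}_\S)^\dagger\s}{\infty}$ \emph{without} the near-isometry hypothesis $\eigM(\pmb{A}_\S)\leq 1.5$ used in~\cite{Cand2008}, which empirically fails at the system sizes of interest (cf.\ Figure \ref{fig:Gauss}). The device behind Proposition \ref{pro:invproj} is Bauer's generalization of Wielandt's inequality~\cite{Householder1965}: it bounds the departure of $(\pmb{A}_\S^T\pmb{A}_\S)^\dagger$ from a scalar multiple of the identity purely through the condition-number surrogate $\tau_k\asymp(\aM/\am)^2$, thereby trading an uncontrolled eigenvalue spread for the two well-concentrated events $\{\sigm(\pmb{A}_\S)\geq\am\}$ and $\{\sigM(\pmb{A}_\S)\leq\aM\}$. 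Beyond that, the bookkeeping needed to keep $\const=\max(4/(a_1a_2)^2,2c_1/(0.29-a_1)^2)$ uniform across all three conditions, the requirement that $a_2$ satisfy (\ref{eqn:Cand1}) (which caps $a_2$ and hence inflates $\const$ through the $4/(a_1a_2)^2$ term), and the hypothesis $k\geq 2$ inherited from Proposition \ref{pro:invproj}, are the only further points requiring care.
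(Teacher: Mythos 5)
Your proposal matches the paper's own argument essentially step for step: the \emph{invertability} and \emph{small projections} conditions are handled by recycling the Subsection \ref{ssect:ell1} derivation of (\ref{eqn:rate2}) (giving $u_1\leq u$ and $u_2\leq 2u$), the \emph{invertability projections} condition is handled via the kernel (\ref{eqn:g4}) together with Proposition \ref{pro:invproj} with $\am=a_1$ and $\aM$ pinned to the $\sigM$ concentration ceiling so that both tails decay as $\exp(-m(0.29-a_1)^2/c_1)$ (giving $u_3\leq u$), and the noiseless case reduces to the exactly-sparse case of Theorem \ref{thm:L1} via the Fuchs equivalence. This is precisely the route taken in Subsection \ref{ssect:Lasso}, including the correct identification of Proposition \ref{pro:invproj} as the one genuinely new ingredient.
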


\begin{rem} \label{rem:LassoNoise}
We emphasize again that the rate (\ref{eqn:rate2}) is measured w.r.t. to the three conditions in Theorem \ref{thm:Lasso}. 
The probability for which both noise conditions i) and ii) are satisfied, and for which condition (\ref{eqn:Cand2}) imposed on $\min_{i \in \S}|\x_i|$ is satisfied, require additional consideration. For the former the probability is at least $1 - n^{-1} (2\pi \log n)^{-\half}$, see~\cite{Cand2008}. 
For the latter, it has to be derived based on signal statistics, \textit{e.g.}, for $|\x_i| \in \ZO$ then $(\min_{i \in \S}|\x_i|) > t$ is observed with probability $(1-t)^k$ with $|\S|=k$.
\end{rem}

Note that the choice for $a_3$ in Theorem \ref{thm:Lasso} implies $\Norm{(\pmb{A}_\S^T\pmb{A}_\S)^\dagger\s}{\infty}$ is roughly on the order $\sqrt{k}$. 
Indeed this is true since $\tau_k \geq 1$, and we note $\tau_k = (\aM/\am)^2  + 2k^{-\half}  + o(k^{-\half})$, thus $\tau_k \approx (\aM/\am)^2 $ for moderate $k$. 
Now LASSO recovery also depends on the probability that condition (\ref{eqn:Cand2}) holds. 
Our choice for $a_3$ causes the RHS of (\ref{eqn:Cand2}) to be roughly of the order $\noisesd \sqrt{2 k \log n}$.
Compare this to~\cite{Cand2008} (see Theorem 1.3) where it was assumed that 
$\sigm(\pmb{A}_\S) \leq 1.5$, 
they only require $a_3 = 3$,
\textit{i.e.}, a factor of $\sqrt{k}$ is lost without this assumption (which was previously argued to be fairly restrictive). 
To improve Proposition \ref{pro:invproj}, one might additionally assume some specific distributions on $\pmb{A}$. We leave further improvements to future work.

%
%

{
\newcommand{\Ev}{\mathcal{E}}
\newcommand{\X}{\pmb{X}}
\newcommand{\Xe}{X}
\newcommand{\ang}{\omega}
\renewcommand{\Bas}{\mat{B}}
\newcommand{\D}{\mat{D}}

\renewcommand{\eigM}{\varsigma_{\scriptsize \mbox{\upshape max}}}
\renewcommand{\eigm}{\varsigma_{\scriptsize \mbox{\upshape min}}}
\begin{proof}[Proof of Proposition \ref{pro:invproj}]
For notational convenience, put $\X = (\pmb{A}_\S^T\pmb{A}_\S)^\dagger$. 
Bound the probability
\ifthenelse{\boolean{dcol}}{ 
\begin{align}
 \Pr\left\{ \Norm{\X\s}{\infty}> a\sqrt{k}  \right\} \leq &
 \Pr\left\{\left. \Norm{\X\s}{\infty} > a\sqrt{k}   \right| \Ev(\am,\aM) \right\} \nn&+  \Pr\{\Ev_c(\am,\aM)\}.
  \label{eqn:invproj1}
\end{align}}{
\bea
 \Pr\left\{ \Norm{\X\s}{\infty}> a\sqrt{k}  \right\} \leq 
 \Pr\left\{\left. \Norm{\X\s}{\infty} > a\sqrt{k}   \right| \Ev(\am,\aM) \right\} +  \Pr\{\Ev_c(\am,\aM)\}.
  \label{eqn:invproj1}
\eea}
where we take $a$ to mean
\bea
	a = \frac{|\tau_k-1|}{\tau_k+1} \cdot \frac{1+k^{-\half}}{\sigm^2(\pmb{A}_\S)} \label{eqn:invproja}
\eea
for $\tau_k$ chosen as in (\ref{eqn:invprojtau}).
We claim that every entry $(\X\s)_i$ of $\X\s$ is upper bounded by $a \sqrt{k}$, for $a$ as in (\ref{eqn:invproja}).
Then by definition of $\Ev(\am,\aM)$, the first term in (\ref{eqn:invproj1}) equals $0$ and 
we would have proven the bound (\ref{eqn:invproj0}).


Let $\mat{C}$ denote a $k \times 2$ matrix. 
The first column $\mat{C}$ is be a normalized version of $\s$, more specifically it equals $ k^{-\half}\s_i$. 
The second column equals the canonical basis vector $\mat{c}_i$, where $\mat{c}_i$ is a 0-1 vector whereby $(\mat{c}_i)_j = 1$ if and only if $j=i$.
Consider the $2 \times 2$ matrix $\X'$ that satisfies $\X' = \mat{C}^T \X \mat{C}$. 
This matrix $\X'$ is symmetric (from symmetry of $\X$) and $k^{-\half} (\X\s)_i = X'_{1,2} = X'_{2,1}$ (from our construction of $\mat{C}$). 
That is the entry $X'_{1,2}$ (and $X'_{2,1}$) of $\pmb{X}'$, correspond to the (scaled) quantity $k^{-\half} (\X\s)_i$ that we want to bound.


Condition on the event $\Ev_c(\am,\aM)$, then $\pmb{A}_\S$ has rank $k$ and therefore $\X = (\pmb{A}_\S^T\pmb{A}_\S)^{\dagger} = (\pmb{A}_\S^T\pmb{A}_\S)^{-1}$. 
Let $\det(\cdot)$ and $\Tr(\cdot)$ denote determinant and trace. 
As in~\cite{Householder1965} equation (11), we have
\ifthenelse{\boolean{dcol}}{
\bea
	 \!\!\!\!\!\!\!\!\! 
	 1 - \frac{X'_{1,2}X'_{1,2}}{X'_{1,1} X'_{2,2}} &=& \frac{4 \det (\X') }{(\Tr(\X'))^2 - (X'_{1,1} - X'_{2,2})^2 } \nn
	 &\geq& \frac{4 \eigM(\X') \cdot \eigm(\X') }{(\Tr(\X'))^2} = \frac{4t}{(1+t)^2}
	 \label{eqn:invproj2}
\eea}{
\bea
	 1 - \frac{X'_{1,2}X'_{1,2}}{X'_{1,1} X'_{2,2}} = \frac{4 \det (\X') }{(\Tr(\X'))^2 - (X'_{1,1} - X'_{2,2})^2 } 
	 \geq \frac{4 \eigM(\X') \cdot \eigm(\X') }{(\Tr(\X'))^2} = \frac{4t}{(1+t)^2}
	 \label{eqn:invproj2}
\eea}
where $t = \eigM(\X')/\eigm(\X')$ and $\eigM$ and $\eigm$ respectively denote the maximum and minimum eigenvalues. 
Now $t = \eigM(\X')/\eigm(\X') \geq 1$.
If $t=1$ then $ 4 t/(1+t)^2 = 1$, and for $t \geq 1$ the function $ 4 t/(1+t)^2$ decreases monotonically. 
We claim that $\tau_k$ in (\ref{eqn:invprojtau}) upper bounds $\eigM(\X')/\eigm(\X')$, and (\ref{eqn:invproj2}) then allows us to produce the following upper bound
%
\ifthenelse{\boolean{dcol}}{
\begin{align}
 |X'_{1,2}| &\leq \sqrt{X_{1,1}' X_{2,2}' \cdot \left( 1 - \frac{4 \tau_k}{(1+\tau_k)^2}\right)}  \nn
 &= \sqrt{X_{1,1}' X_{2,2}'} \cdot \frac{|\tau_k-1|}{1 + \tau_k}.
\end{align}}{
\[
 |X'_{1,2}| \leq \sqrt{X_{1,1}' X_{2,2}' \cdot \left( 1 - \frac{4 \tau_k}{(1+\tau_k)^2}\right)} 
 = \sqrt{X_{1,1}' X_{2,2}'} \cdot \frac{|\tau_k-1|}{1 + \tau_k}.
\]}
Bound $(X_{1,1}' X_{2,2}')^\half$ by the maximum eigenvalue $\eigM(\X')$ of $\X'$.
Then, further bound $\eigM(\X')$ by $(1+k^{-\half})/\sigm^2(\pmb{A}_\S)$, which gives the form  (\ref{eqn:invproja}). 
This bound is argued as follows.
For $k\geq 2$, we have the columns in $\mat{C}$ to be \emph{linearly independent}.
Since $\X' = \mat{C}^T \X \mat{C}$ and $\X$ is \emph{positive definite}, it is then clear that $\eigM(\X') \leq \eigM(\mat{C}^T\mat{C}) \cdot \eigM(\X)$.
Now $\mat{C}^T \mat{C}$ is a $2 \times 2$ matrix with diagonal elements 1, and off-diagonal elements $\pm 1/\sqrt{k}$. 
Hence $\eigM(\mat{C}^T \mat{C}) = 1 + k^{-\half}$. 
Also $\eigM(\X) \leq 1/\sigm^2(\pmb{A}_\S)$, and the bound follows.


To finish, we show the claim $\tau_k \geq \eigM(\X')/\eigm(\X')$. By similar arguments as above, it follows that
\[
\frac{\eigM(\X')}{\eigm(\X')} \leq  \frac{\eigM(\mat{C}^T \mat{C})}{\eigm(\mat{C}^T \mat{C})}\cdot \frac{\eigM(\X)}{\eigm(\X)}   = \frac{1 + k^{-\half}}{1 -  k^{-\half}}\cdot \frac{\sigM^2(\pmb{A}_\S)}{\sigm^2(\pmb{A}_\S)} \leq \tau_k
\]
since $\eigm(\X') \geq \eigm(\mat{C}^T\mat{C}) \cdot \eigm(\X)$, and $\eigm(\X')= 1- k^{-\half}$, 
and $\X = (\pmb{A}_\S^T\pmb{A}_\S)^{-1}$. We are done.
%
%
%
%
%
%
%
\end{proof}
}

\section{Conclusion} \label{sect:conc}

We take a first look at U-statistical theory for predicting the ``average-case'' behavior of salient CS matrix parameters.
Leveraging on the generality of this theory, we consider two different recovery algorithms i) $\ell_1$-minimization and ii) LASSO.
The developed analysis is observed to have good potential for predicting CS recovery, and compares well (empirically) with Donoho \& Tanner~\cite{DTExp} recent ``average-case'' analysis for system sizes found in implementations.
Measurement rates that incorporate fractional $u$ failure rates, are derived to be on the order of $ k  [\log((n-k)/u) + \sqrt{2(k/n) \log(n/k)}]$, similar to the known optimal $k\log(n/k)$ rate.
Empirical observations suggest possible improvement for $\const$ (as opposed to typical ``worst-case'' analyses whereby implicit constants are known to be inherently large).

There are multiple directions for future work.
Firstly while restrictive maximum eigenvalue assumptions are avoided (as StRIP-recovery does not require them), the applied techniques could be fine-tuned.
It is desirable to overcome the $\sqrt{k}$ losses observed here for noisy conditions.
Secondly, it is interesting to further leverage the general U-statistical techniques to other different recovery algorithms, to try and obtain their good ``average-case'' analyses.
Finally, one might consider similar U-statistical ``average-case'' analyses for the case where the sampling matrix columns are dependent, which requires appropriate extensions of Theorem \ref{thm:Ustat}.


\section*{Acknowledgment}
The first author is indebted to A. Mazumdar for discussions, and for suggesting to perform the rate analysis.

\appendix 

\subsection{Proof of Theorem \ref{thm:Ustat}} \label{app:proofDev}

\newcommand{\floor}[1]{\left\lfloor#1\right\rfloor }
\newcommand{\perm}{\pi}
\renewcommand{\t}{\eps_n}
\newcommand{\M}{\omega_n}
\newcommand{\Sm}{S_\perm}
For notational simplicity we shall henceforth drop explicit dependence on $a$ from all three quantities $U_n(a), p(a)$ and $\g(\mat{A},a)$ in this appendix subsection. 
While $U_n$ is made explicit in Definition \ref{def:Ustat} as a statistic corresponding to the realization $\Sens = \pmb{A}$, this proof considers $U_n$ consisting of random terms $g(\pmb{A}_\S)$ for purposes of making probabalistic estimates.
Theorem \ref{thm:Ustat} is really a \emph{law of large numbers} result. 
However even when the columns $\pmb{A}_i$ are assumed to be IID, the terms $g(\pmb{A}_\S)$ in $U_n$ depend on each other.
As such, the usual techniques for IID sequences do not apply. 
Aside from large deviation results such as Thm. \ref{thm:Ustat}, there exist \emph{strong law} results, see~\cite{Berk}. The following proof is obtained by combining ideas taken from~\cite{Hoef} and~\cite{Sen}. We use the following new notation just in this subsection of the appendix. Partition the index set $ \{1,2,\cdots,n\}$ into $\M = \floor{n/k}$ subsets denoted $\S_i$ each of size $k$, and a single subset $\R$ of size at most $k$. 
More specifically, let $\S_i = \{(i-1)\cdot k+1,(i-1)\cdot k+2,\cdots, i \cdot k\}$ and let $\R = \{\floor{n/k}\cdot k+1,\floor{n/k}\cdot k+2,\cdots, n  \}$. Let $\perm $ denote a \emph{permutation} (bijective) mapping $ \{1,2,\cdots, n\} \rightarrow \{1,2,\cdots, n\}$. The notation $\perm(\S)$ denotes the set of all \emph{images} of each element in $\S$, under the mapping $\perm$. Following Section 5c in~\cite{Hoef} we express the U-statistic $\V_n$ of $\pmb{A}$ in the form 
\bea
 \V_n  &=& \frac{1}{n!} \sum_{\perm} \left( \frac{1}{\M} \sum_{i=1}^{\M} g(\pmb{A}_{\perm(\S_i)}) \right), \label{eqn:U}
\eea
the first summation taken over all $n!$ possible permutations $\perm$ of $ \{1,2,\cdots,n\}$. To verify, observe that any subset $\S$ is counted exactly $\M \cdot k! (n-k)!$ times in the RHS of (\ref{eqn:U}). 

Recall $p=\E g(\pmb{A}_\S) = \E \V_n$. From the theorem statement let the term $\eps_n^2$ equal $c p (1-p) \cdot \M^{-1} \log \M$ where $c>2$. We show that the probabilities $\Pr\{|\V_n - p| > \t\}$ for each $n$ are small. For brevity, we shall only explicitly treat the upper tail probability $\Pr\{\V_n - p > \t\}$, where standard modifications of the below arguments will address the lower tail probability $\Pr\{-\V_n + p > \t\}$ (see comment in p. 1,~\cite{Hoef}). Using the expression (\ref{eqn:U}) for $\V_n$, write the probability $\Pr\{\V_n - p > \t\}$ for any $h> 0$ as 
\ifthenelse{\boolean{dcol}}{ 
\bea
	\Pr\{\V_n - p > \t\} &\leq&  \E \exp (h( \V_n  - p +   \t)) \nn
	&=& \E \exp\left( \frac{1}{n!} \left( \sum_{\pi}  h(\Sm - p + \t ) \right)\right), \nonumber
\eea}{
\bea
	\Pr\{\V_n - p > \t\} &\leq&  \E \exp (h( \V_n  - p +   \t))
	= \E \exp\left( \frac{1}{n!} \left( \sum_{\pi}  h(\Sm - p + \t ) \right)\right), \nonumber
\eea}
where here $\Sm$ is a RV that equals the inner summation in (\ref{eqn:U}), i.e. $\Sm = \frac{1}{\M}\sum_{i=1}^{\M} g(\pmb{A}_{\perm(\S_i)})$. Using convexity of the function $\exp(\cdot)$ we express 
\bea
\Pr\{\V_n - p > \t\} &\leq&  \frac{1}{n!}  \sum_{\perm} \E \exp(h(\Sm- p + \t)  ). \nonumber
\eea
Now observe that the RV $\Sm$ is an average of $\M$ IID terms $g(\pmb{A}_{\perm(\S_i)})$. This is due to the assumption that the columns $\pmb{A}_i$ of $\pmb{A}$ are IID, and also due to the fact that the sets $\perm(\S_i)$ are disjoint (recall sets $\S_i$ are disjoint).
Hence for any permutation $\perm$, by this independence we have $\E \exp(h\Sm) = (\E \exp( h' \cdot g(\pmb{A}_{\perm(\S_1)})))^{\M} $, where the normalization $h' = h/\M$ bears no consequence. 
The RV $g(\pmb{A}_{\perm(\S_1)}) $ is bounded, i.e. $0 \leq g(\pmb{A}_{\perm(\S_i)}) \leq 1$, and its expectation $\E g(\pmb{A}_{\perm(\S_1)}) $ equals $p$. By convexity of $\exp(\cdot)$ again and for all $h >0$, the inequality 
$e^{h\alpha} \leq e^{h }\alpha + 1 - \alpha  $ 
holds for all $0 \leq \alpha \leq 1$. Therefore putting $\alpha = g(\pmb{A}_{\perm(\S_1)})$ we get the inequality $\exp( h \cdot g(\pmb{A}_{\perm(\S_1)})) \leq 1 + (e^h-1)\cdot g(\pmb{A}_{\perm(\S_1)})  $. 
By the irrelevance of $\perm$ in previous arguments, by putting $\E g(\pmb{A}_{\perm(\S_1)})  = p$
\[
\Pr\{\V_n - p > \t\} \leq e^{-h (\t+p)} \left(1 - p + p e^h\right)^{\M}.
\]
We optimize the bound by putting $pe^h = (1-p)(p+\t)/(1-p-\t)$, see (4.7) in~\cite{Hoef}, to get
\ifthenelse{\boolean{dcol}}{
\begin{align}
&\Pr\{\V_n - p > \t\} \nn 
&\leq \left((1+ \t p\Iv)^{p+\t} (1-\t(1-p)\Iv)^{1-p-\t} \right)^{-\M}. \label{eqn:Uproof1}
\end{align}}{
\bea
\Pr\{\V_n - p > \t\} \leq \left((1+ \t p\Iv)^{p+\t} (1-\t(1-p)\Iv)^{1-p-\t} \right)^{-\M}. \label{eqn:Uproof1}
\eea}
Following (2.20) in~\cite{Sen} we use the relation $\log(1+\alpha) = \alpha - \frac{1}{2}\alpha^2 + o(\alpha^2)$ as $\alpha\rightarrow 0$, to express the logarithmic exponent on the RHS of (\ref{eqn:Uproof1}) as 
\[
 \frac{- \M \t^2\cdot(1+ o(1))}{2p(1-p)}.
\]
Therefore by the form $\t^2 = c p (1-p) \cdot \M^{-1} \log \M$ where $c>2$, for sufficiently large $n$ we have 
\[
	\Pr\{\V_n - p > \t\} \leq \M^{-c/2} < \M^{-1} 
\]
which in turn implies $\sum_{n = k}^\infty  \Pr\{\V_n - p > \t\} < \infty$. Repeating similar arguments for the lower tail probability $\Pr\{-\V_n + p > \t\}$, we eventually prove $\sum_{n = k}^\infty  \Pr\{|\V_n - p| >\t\} < \infty$ which implies the claim.

\newpage
\pagestyle{empty}
\section*{Supplementary Material}
\setcounter{subsection}{0}

\renewcommand{\thefigure}{\Alph{subsection}.\arabic{figure}}


\subsection{Proofs of StRIP-type recovery guarantees appearing in Subsection \ref{ssect:UstatStr}} \label{app:recov}

In this part of the appendix we provide the proofs for the two StRIP-type recovery guarantees discussed in this paper. The following are proofs for Theorems \BargThm~and \CandThm.

{
\newcommand{\err}{\matt{\eps}}
\newcommand{\Sc}{{\S_c}}
\begin{proof}[Proof of Theorem \BargThm,~\textit{c.f.}, Lemma 3,~\cite{Barga}]
Define $\err \in \Real^n$ as $\err = \matt{\x}^* - \matt{\x}$, \emph{i.e.}, $\err$ is the recovery error vector. 
The proof technique closely follows that of Theorem 1.2,~\textit{c.f.},~\cite{RIP}. 
Since $\sgn(\matt{\x}_\S)= \matt{\beta}$, we have the inequality
\bea
	\Norm{(\matt{\x}+\err)_\S}{1} \geq \Norm{\matt{\x}_\S}{1} + \matt{\beta}^T\err_\S. \label{eqn:Arya1}
\eea
Since $\matt{\x}^*$ solves (\ref{eqn:L1}), hence $\Norm{\matt{\x}^*}{1}\leq \Norm{\matt{\x}}{1}$. Putting $\matt{\x}^*=\matt{\x} + \err$, we have
\ifthenelse{\boolean{dcol}}{
\begin{align}
\Norm{\matt{\x}}{1} \geq \Norm{\matt{\x}+\err}{1} &= ~\Norm{(\matt{\x}+\err)_\S}{1} + \Norm{(\matt{\x}+\err)_\Sc}{1} \nn
&\geq ~\Norm{\matt{\x}_\S}{1} + \matt{\beta}^T\err_\S + \Norm{\err_\Sc}{1} - \Norm{\matt{\x}_\Sc}{1},
\end{align}}{
\bea
\Norm{\matt{\x}}{1} \geq \Norm{\matt{\x}+\err}{1} &=& \Norm{(\matt{\x}+\err)_\S}{1} + \Norm{(\matt{\x}+\err)_\Sc}{1} \nn
&\geq& \Norm{\matt{\x}_\S}{1} + \matt{\beta}^T\err_\S + \Norm{\err_\Sc}{1} - \Norm{\matt{\x}_\Sc}{1},
\eea}
where the last step follows the inequality (\ref{eqn:Arya1}), and the triangular inequality.
Re-arranging and putting $\Norm{\matt{\x}_\Sc}{1} = \Norm{\matt{\x}}{1} - \Norm{\matt{\x}_\S}{1}$ 
we get
\bea
\Norm{\err_\Sc}{1} \leq -\matt{\beta}^T\err_\S  + 2\Norm{\matt{\x}_\Sc}{1}. \label{eqn:Arya3}
\eea
We next bound the term $-\matt{\beta}^T\err_\S$ with $|\matt{\beta}^T\err_\S|$, and for now assume that the following claim holds
\bea
|\matt{\beta}^T\err_\S| \leq \Norm{\matt{\beta}^T\SensS^\dagger\Sens_\Sc}{\infty} \cdot \Norm{\err_\Sc}{1}. \label{eqn:Arya2}
\eea
We then proceed to show the bound on $\Norm{\err_\Sc}{1}$ (or $\Norm{\matt{\x}^*_\S - \matt{\x}_\S}{1}$) to complete the first part of the proof. 
Using the \emph{small projections} condition, bound $\Norm{\matt{\beta}^T\SensS^\dagger\Sens_\Sc}{\infty} \leq a_2$ using some $a_2 < 1$. 
This gives a upper bound of $a_2 \cdot \Norm{\err_\Sc}{1}$ on $|\matt{\beta}^T\err_\S|$ in (\ref{eqn:Arya2}). 
Finally use this in (\ref{eqn:Arya3}) get $\Norm{\err_\Sc}{1}  \leq a_2 \Norm{\err_\Sc}{1} + 2\Norm{\matt{\x}_\Sc}{1}$, or equivalently 
$\Norm{\err_\Sc}{1}  \leq  2/(1-a_2) \cdot \Norm{\matt{\x}_\Sc}{1}$.
To show the claim (\ref{eqn:Arya2}), note that $\err$ is in the null-space of $\Sens$, i.e. $\Sens\err=\mat{0}$, or equivalently, $\Sens_\S \err_\S = - \Sens_\Sc \err_\Sc$. Let $\mat{I}$ denote the size-$k$ identity matrix. By the \emph{invertability} condition, the pseudoinverse $\SensS^\dagger$ satisfies $ \SensS^\dagger \SensS = \mat{I}$. 
Hence 
\bea
	\err_\S = - \SensS^\dagger\Sens_\Sc \err_\Sc, \label{eqn:Arya4}
\eea
and take the vector inner product with $\matt{\beta}$ on both sides to obtain $\matt{\beta}^T\err_\S = - \matt{\beta}^T\SensS^\dagger\Sens_\Sc \err_\Sc$. 
Finally (\ref{eqn:Arya2}) holds by taking absolute value of $\matt{\beta}^T\err_\S$, and writing $|\matt{\beta}^T\SensS^\dagger\Sens_\Sc \err_\Sc| \leq \Norm{\matt{\beta}^T\SensS^\dagger\Sens_\Sc}{\infty} \cdot \Norm{\err_\Sc}{1}$.

To second part is to elucidate the bound on $\Norm{\err_\S}{1}$ (or $\Norm{\matt{\x}^*_\Sc - \matt{\x}_\Sc}{1}$).
Starting from the previous relationship (\ref{eqn:Arya4}) we have 
$\Norm{\err_\S}{1} 
= \Norm{\SensS^\dagger\Sens_\Sc \err_\Sc}{1} 
\leq \Norm{\SensS^\dagger\Sens_\Sc}{\infty} \cdot \Norm{\err_\Sc}{1}$.
The result then follows by using the \emph{worst-case projections} condition to bound $\Norm{\SensS^\dagger\Sens_\Sc}{\infty}$ by some positive $a_3 $, and also bounding $\Norm{\err_\Sc}{1}$ using the bound obtained in the first part of this proof. 
\end{proof}
}

For the next two proofs we use the following notation. Let $\mat{I}$ denote the identity matrix, and let $\mat{P}$ denote a projection matrix onto the column subspace of $\Sens_\S$, i.e., $\mat{P} = \Sens_\S \Sens_\S^\dagger$. We first address the proof of Proposition \ref{pro:noise}.

\begin{pro}[\textit{c.f.},~\cite{Cand2008}] \label{pro:noise}
Let $\pmb{Z}$ be a random noise vector, whose components are IID zero mean Gaussian with variance $\noisesd^2$. 
Assume that the matrix $\Sens$ satisfies $\Norm{\col_i}{2} = 1$ for all columns $\col_i$. 
Then the realization $\pmb{Z}= \mat{z}$ satisfies conditions i) and ii) in Theorem \CandThm~with probability at least $1 - n^{-1} (2\pi \log n)^{-\half}$.
\end{pro}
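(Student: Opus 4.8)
The plan is to observe that every quantity constrained in conditions i) and ii) of Theorem \CandThm~is a fixed linear functional of the Gaussian vector $\pmb{Z}$, hence a centered univariate Gaussian, and thereby to reduce the claim to a Gaussian tail estimate together with a union bound. Throughout I would work on the set of $\S$ for which $\sigm(\Sens_\S) > a_1$ (the \emph{invertability} condition, which is also the regime in which condition i) is even well posed), so that $(\Sens_\S^T\Sens_\S)^{-1}$ exists and $\lambda_{\max}((\Sens_\S^T\Sens_\S)^{-1}) = 1/\eigm(\Sens_\S) \le 1/a_1^2$.

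First I would treat condition i). Writing the $j$-th coordinate of $(\Sens_\S^T\Sens_\S)^{-1}\Sens_\S^T\pmb{Z}$ as $\langle \mathbf{r}_j,\pmb{Z}\rangle$, where $\mathbf{r}_j$ is the $j$-th column of $\Sens_\S(\Sens_\S^T\Sens_\S)^{-1}$, one gets $\langle \mathbf{r}_j,\pmb{Z}\rangle \sim N(0,\noisesd^2\Norm{\mathbf{r}_j}{2}^2)$ since the entries of $\pmb{Z}$ are IID $N(0,\noisesd^2)$, and $\Norm{\mathbf{r}_j}{2}^2 = [(\Sens_\S^T\Sens_\S)^{-1}]_{jj} \le 1/a_1^2$. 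Hence the $j$-th inequality in i) can fail only if a standard normal exceeds $\sqrt{2\log n}$ in absolute value. For condition ii), the $i$-th coordinate of $\Sens_{\S_c}^T(\mat{I}-\mat{P})\pmb{Z}$ equals $\langle (\mat{I}-\mat{P})\col_i,\pmb{Z}\rangle \sim N(0,\noisesd^2\Norm{(\mat{I}-\mat{P})\col_i}{2}^2)$; because $\mat{I}-\mat{P}$ is an orthogonal projection and $\Norm{\col_i}{2}=1$, this variance is at most $\noisesd^2$, so the $i$-th inequality in ii) can fail only if a standard normal exceeds $2\sqrt{\log n}$.

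Finally I would invoke the Mills-ratio bound $\Pr\{|N(0,1)| > t\} \le \sqrt{2/\pi}\,t^{-1}e^{-t^2/2}$ at $t = 2\sqrt{\log n}$ for the (at most $n-k$) coordinates of ii) — each contributing at most $(n^2\sqrt{2\pi\log n})^{-1}$, for a total below $(n\sqrt{2\pi\log n})^{-1}$ — and handle the (at most $k$) coordinates of i) by the same device, then take a union bound over all of them. Collecting terms gives the stated failure probability $n^{-1}(2\pi\log n)^{-\half}$; the precise constant bookkeeping is exactly that of~\cite{Cand2008}. The step I expect to be the main obstacle is precisely this last accounting: one must keep the $2\sqrt{\log n}$ threshold in ii) intact (so that the off-support terms still sum to order $(n\sqrt{\log n})^{-1}$ rather than $(\log n)^{-1/2}$), and one must make sure the variance bound for i) is only applied where the \emph{invertability} condition is in force; neither point is deep, but both are easy to mishandle, which is why the detailed proof is relegated to Supplementary Material \ref{app:recov}.
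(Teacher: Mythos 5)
Your proposal is correct and follows essentially the same route as the paper's proof in Supplementary Material \ref{app:recov}: reduce each coordinate of the two quantities to a centered Gaussian, bound its standard deviation by $\noisesd/a_1$ for condition i) (via the \emph{invertability} condition and the spectral norm of $(\Sens_\S^T\Sens_\S)^{-1}\Sens_\S^T$) and by $\noisesd$ for condition ii) (via the projection contraction and $\Norm{\col_i}{2}=1$), then apply the Mills-ratio tail bound and a union bound over the $k$ and $n-k$ coordinates. Even the loose step you flag in the final accounting for condition i) is handled the same way (i.e., inherited from~\cite{Cand2008}) in the paper's own proof, so there is nothing further to reconcile.
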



{
\newcommand{\xn}{{\matt{\x}}}
\newcommand{\Sc}{{\S_c}}
\newcommand{\err}{\matt{\eps}}
\newcommand{\yn}{\tilde{\mat{\y}}}
\newcommand{\C}{\mat{c}}
\newcommand{\D}{\mat{d}}
\newcommand{\nsd}{\noisesd}
\newcommand{\Zt}{\tilde{Z}}
\begin{proof}[Proof of Proposition \ref{pro:noise},~\textit{c.f.},~\cite{Cand2008}]
The result will follow by showing i) holds with probability $k\cdot n^{-2} (2 \pi \log n)^{-\half}$, and by showing ii) holds with probability $(n-k)\cdot n^{-2} (2 \pi \log n)^{-\half}$.

For i), first assume each component of $\pmb{Z}$ has variance 1.
Let $\C_i$ denote the $i$-th row of $(\Sens_\S^T \Sens_\S)^{-1}\Sens_\S$, thus we have 
$\Norm{(\Sens_\S^T \Sens_\S)^{-1}\Sens_\S\pmb{Z}}{\infty} = \max_i |\C_i^T \pmb{Z}|$. 
Since $\pmb{Z}$ is Gaussian, thus
\bea
	\Pr\{\Norm{(\Sens_\S^T \Sens_\S)^{-1}\Sens_\S\pmb{Z}}{\infty} > z \} \leq k\cdot \Pr\{ | \Zt| > z\}, \label{eqn:noise1}
\eea
where $\Zt$ is a Gaussian RV with standard deviation at least the $\ell_2$-norm of any row $\C_i$. 
It remains to then upper bound $\Norm{\C_i}{2}$ for all $i$, which follows as
$\Norm{\C_i}{2} \leq \Norm{(\Sens_\S^T \Sens_\S)^{-1}\Sens_\S}{2}$.
The spectral norm $\Norm{(\Sens_\S^T \Sens_\S)^{-1}\Sens_\S}{2}$ is at most the reciprocal of the smallest non-zero singular value of $\Sens_\S$, and by the \emph{invertability} condition for some positive $a_1$, we have 
$\Norm{(\Sens_\S^T \Sens_\S)^{-1}\Sens_\S}{2} \leq a_1^{-1}$.
Then we let $\Zt$ in (\ref{eqn:noise1}) have standard deviation $a_1^{-1}$. 
Equivalently, 
\ifthenelse{\boolean{dcol}}{
\begin{align}
\Pr\{\Norm{(\Sens_\S^T \Sens_\S)^{-1}\Sens_\S^T\pmb{Z}}{\infty} > z \} 
&\leq k\cdot \Pr\{ | Z| > a_1 \cdot z\} \nn
&\leq 2k \cdot f_Z(a_1 z)/(a_1 z)
\end{align}}{
\[
\Pr\{\Norm{(\Sens_\S^T \Sens_\S)^{-1}\Sens_\S^T\pmb{Z}}{\infty} > z \} \leq k\cdot \Pr\{ | Z| > a_1 \cdot z\}
\leq 2k \cdot f_Z(a_1 z)/(a_1 z)
\]}
where $Z$ is a standard normal RV with density function $f_Z(z)$.
Generalizing to the case where each component of $\pmb{Z}$ has variance $\nsd$, the upper bound becomes $2k\cdot f_Z( (a_1 z) /\nsd)/( (a_1 z) /\nsd)$. Put $z = (\nsd \sqrt{2\log n})/a_1$ to get the claimed probabilistic upper estimate $k\cdot n^{-2} (2 \pi \log n)^{-\half}$.

For ii) we proceed similarly.
Observe that for any $i\notin \S$, we have $\Norm{\col_i^T (\mat{I}-\mat{P})}{2} \leq \Norm{\col_i}{2} = 1$. Then put $z = \nsd 2\sqrt{\log n}$ in case ii) to get the claimed probabilistic upper estimate $(n-k)\cdot n^{-2} (2 \pi \log n)^{-\half}$.
\end{proof}
}

{
\newcommand{\xn}{{\matt{\x}}}
\newcommand{\Sc}{{\S_c}}
\newcommand{\err}{\matt{\eps}}
\newcommand{\yn}{\tilde{\mat{\y}}}
\newcommand{\C}{\mat{c}}
\newcommand{\D}{\mat{d}}
\newcommand{\nsd}{\noisesd}
\newcommand{\Zt}{\tilde{Z}}
\newcommand{\maye}{\matt{\x}'}
\begin{proof}[Proof of Theorem 1.3,~\textit{c.f.},~\cite{Cand2008}]
We shall show that any signal $\matt{\x}$ with sign $\matt{\beta}$ and support $\S$, assuming $(\matt{\beta},\S)$ satisfy all three \emph{invertability}, \emph{small projections}, and \emph{invertability projections} conditions together with (\ref{eqn:Cand1}) and (\ref{eqn:Cand2}), will have both sign and support successfully recovered.

The proof follows by constructing a vector $\maye$ from $\matt{\x}$ as follows. 
Let $\err$ denote the error $\err = \maye - \matt{\x}$, and $\maye$ is defined by letting $\err$ satisfy
\bea
	\err_\S &=& (\Sens_\S^T \Sens_\S)^{-1} ( \Sens_\S^T \mat{z} - 2 \nsd\Reg \matt{\beta}), \nn
	\err_\Sc &=& \mat{0}. \label{eqn:lasnoise_1}
\eea
Let us first claim that if (\ref{eqn:Cand2}) holds, then the support of $\maye$ equals that of $\matt{\x}$. 
If this is true, then standard
subgradient arguments, see~\cite{Cand2008,Fuchs2005}, will lead us to conclude that $\maye$ must be the unique Lasso (\ref{eqn:Lasso}) solution (i.e., $\maye = \matt{\x}^*$) if i) it satisfies
\bea
	\col_i^T (\yn - \Sens\maye) &=& 2 \nsd\Reg \cdot \sgn(\alpha'_i), \mbox{   if   } i \in \S,  \nn
	|\col_i^T (\yn - \Sens\maye)| &<& 2 \nsd \Reg , ~~~~~\mbox{   if   } i \notin \S,  \label{eqn:lasnoise_2}
\eea
and ii) the submatrix $\Sens_{\S}$ has full column rank. The condition ii) follows from the \emph{invertability} condition, and the latter half of the proof will verify i). Let us first verify the previous claim that both $\maye$ and $\matt{\x}$ have exact same supports. In fact, we go further to verify that $\maye$ and $\matt{\x}$ also have the same signs. 
First check
\ifthenelse{\boolean{dcol}}{
\begin{align}
	\Norm{\err_\S}{\infty} 
	&\leq~ \Norm{(\Sens_\S^T \Sens_\S)^{-1} \Sens^T_\S \mat{z}}{\infty} 
	+ 2\Reg \nsd \cdot  \Norm{(\Sens_\S^T \Sens_\S )^{-1}\matt{\beta}}{\infty} \nn
	&\leq ~a_1^{-1}  \nsd\cdot \sqrt{2 \log n} + 2 a_3 \nsd \cdot \Reg,
\end{align}}{
\[
	\Norm{\err_\S}{\infty} \leq \Norm{(\Sens_\S^T \Sens_\S)^{-1} \Sens^T_\S \mat{z}}{\infty} 
	+ 2\Reg \nsd \cdot  \Norm{(\Sens_\S^T \Sens_\S )^{-1}\matt{\beta}}{\infty}
	\leq a_1^{-1}  \nsd\cdot \sqrt{2 \log n} + 2 a_3 \nsd \cdot \Reg,
\]}
where the final inequality follows from noise condition i) from Proposition \ref{pro:noise}, 
and the \emph{invertability projections} condition which provides the bound $\Norm{(\Sens_\S^T \Sens_\S )^{-1}\matt{\beta}}{\infty} \leq a_3$ for some positive $a_3$. By assumption (\ref{eqn:Cand2}) and comparing with the above upper estimate for $\Norm{\err_\S}{\infty}$, our claim must hold.

Next we go on to verify $\maye$ satisfies (\ref{eqn:lasnoise_2}). We have
\bea
	\yn - \Sens\maye = \mat{z} - \Sens\err 
	= \mat{z} - (\SensS^\dagger)^T \left(\SensS^T\mat{z} - 2 \nsd \Reg \cdot \matt{\beta} \right) \label{eqn:lasnoise_yn}
\eea
where the last equality follows by first writing $\Sens \err = \Sens \err_\S$, then substituting (\ref{eqn:lasnoise_1}), and putting $\SensS^\dagger = (\SensS^T \SensS)^{-1} \SensS^T$. Now because $\SensS^\dagger$ is a right inverse of $\SensS^T$, by left multiplying the above expression by $\SensS^T$ we conclude 
\[
\SensS^T(\yn - \Sens\maye) = 2 \nsd \Reg\cdot \matt{\beta},
\]
which is equivalent to the first set of equations of (\ref{eqn:lasnoise_1}) as we verified before that $\matt{\beta} = \sgn(\maye_\S)$. 
For the second set of equations, observe from (\ref{eqn:lasnoise_yn}) that
\bea
(\mat{I} - \mat{P}) (\yn - \Sens\xn^*) &=& (\mat{I} - \mat{P}) \mat{z}, \nn
\mat{P} (\yn - \Sens\xn^*) &=& 2 \nsd \Reg \cdot (\SensS^\dagger)^T\matt{\beta}, \nonumber
\eea
where the first equality follows because $(\mat{I} - \mat{P}) (\SensS^\dagger)^T = \mat{0}$, and the second equality follows because $\mat{P} (\SensS^\dagger)^T \SensS^T = \mat{P} \mat{P}^T = \mat{P}^2 = \mat{P}$. 
Using the above two identities, we estimate
\ifthenelse{\boolean{dcol}}{
\begin{align}
& \Norm{\Sens_\Sc^T (\yn - \Sens\xn^*)}{\infty} \nn
&\leq~ \Norm{\Sens_\Sc^T (\mat{I} - \mat{P}) (\yn - \Sens\maye)}{\infty} + 
       \Norm{\Sens_\Sc^T \mat{P} (\yn - \Sens\maye)}{\infty} \nn
&= ~\Norm{\Sens_\Sc^T (\mat{I} - \mat{P}) \mat{z}}{\infty} + 
    2 \nsd \Reg \cdot \Norm{\Sens_\Sc^T(\SensS^\dagger)^T\matt{\beta}}{\infty} \nn
&\leq ~\frac{\nsd \sqrt{2} \Reg}{1+a} + 2 \nsd a_2\cdot \Reg , \label{eqn:lasnoise_3}
\end{align}}{
\bea
\Norm{\Sens_\Sc^T (\yn - \Sens\xn^*)}{\infty} 
&\leq& \Norm{\Sens_\Sc^T (\mat{I} - \mat{P}) (\yn - \Sens\maye)}{\infty} + 
       \Norm{\Sens_\Sc^T \mat{P} (\yn - \Sens\maye)}{\infty} \nn
&=& \Norm{\Sens_\Sc^T (\mat{I} - \mat{P}) \mat{z}}{\infty} + 
    2 \nsd \Reg \cdot \Norm{\Sens_\Sc^T(\SensS^\dagger)^T\matt{\beta}}{\infty} \nn
&\leq& \frac{\nsd \sqrt{2} \Reg}{1+a} + 2 \nsd a_2\cdot \Reg , \label{eqn:lasnoise_3}
\eea}
where the upper estimate $(\nsd \sqrt{2} \Reg)/(1 +a) = \nsd 2\sqrt{\log n} $ follows from noise condition ii) stated in Proposition \ref{pro:noise}, and $ \Norm{\Sens_\Sc^T(\SensS^\dagger)^T\matt{\beta}}{\infty} \leq a_2 $ follows from the \emph{small projections} property.
Finally from assuming (\ref{eqn:Cand1}) 
we have $\sqrt{2}(1+a)^{-1} + 2  a_2 < 2 $, and applying to the last member of (\ref{eqn:lasnoise_3})
proves
$\Norm{\Sens_\Sc^T (\yn - \Sens\maye)}{\infty}  < 2 \nsd \Reg$, which verifies $\maye$ satisfies the second set of equations of (\ref{eqn:lasnoise_1}). Thus we verified $\maye = \matt{\x}^*$ which is what we need to complete the proof.
\end{proof}
}

\subsection{Derivation of standard bounds} \label{sup:stdbound}

{
\renewcommand{\A}{\pmb{A}}
\renewcommand{\b}{\pmb{B}}
\newcommand{\Ev}{\mathcal{E}}
\renewcommand{\eigM}{\varsigma_{\scriptsize \mbox{\upshape max}}}
In the Gaussian case note $\E X_i^2 = 1$ and $\E X_i= 0$. Then $\sum_{i=1}^m c_i X_i $ is also Gaussian with variance $\Norm{\mat{c}}{2}^2$. Hence by Markov's inequality we have the (single-sided) inequality 
$\Pr\left\{\sum_{i=1}^m c_i X_i >   t \right\} \leq \exp(-ht + h^2/\Norm{\mat{c}}{2}^2)$ for any $h> 0$. The claim  for the Gaussian case will follow by setting $h = t \cdot \Norm{\mat{c}}{2}^2 / 2$, and noting that for the other side $\Pr\left\{- (\sum_{i=1}^m c_i X_i) >   t \right\} = \Pr\left\{\sum_{i=1}^m c_i X_i >   t \right\}$. 
For the bounded case, note $|X_i| \leq 1$ and $\E X_i= 0$, and the claim follows from Hoeffding's (2.6) in~\cite{Hoef}.
}

\subsection{Derivation of measurement rates} \label{sup:rate}

For the \emph{small projections} condition, start from $p(a_2)$ being bounded by the RHS of (\ref{eqn:proj0}) where $a=a_2$.
As before bound $\Pr\{ \sigm(\A_\S) \leq a_1\} \leq \exp(-m\cdot (0.29-a_1)^2/c_1)$, where we had set $\sqrt{\delta} = a_1$. 
From the identity $\sqrt{\alpha_1} \leq \sqrt{\alpha_2}  + \sqrt{\alpha_3}$ for positive quantities $\alpha_i$, 
it follows from Theorem \ref{thm:Ustat} and (\ref{eqn:Ubound}) that we will have $u_2 \leq (n-k) \cdot U_n(a_3) \leq 2 u$, if we enforce
\bea
\frac{1}{2} \left[ \log 2 + \log(n-k)  - \frac{m (a_1 a_2)^2 }{2k }\right] + t &\leq& \log u, \nn
\frac{1}{2} \left[   \log(n-k)  - \frac{m (0.29-a_1)^2 }{c_1 }\right] + t &\leq& \log u, \nonumber
\eea
where $t = \sqrt{2(k/n)\log(n/k)}$.
Ignoring the $\log2$ term, and using $\sqrt{n-k} \leq n-k$, it follows that (\ref{eqn:rate2}) enforces the two above conditions.

Similarly for the \emph{invertability} condition, to have $u_1 = U_n(a_1) \leq u$ it follows from Theorem \ref{thm:Ustat} and (\ref{eqn:Ubound}) that we need to enforce to second condition above.

For the \emph{worst-case projections} condition, to have $u_3 \leq (n-k) \cdot U_n(a_3) \leq 2 u$ we need to enforce 
\bea
\frac{1}{2} \left[(k+1)\cdot \log 2 + \log(n-k)  - \frac{m (a_1 a_3)^2 }{2k }\right] + t &\leq& \log u, \nn
\frac{1}{2} \left[   k \log 2 + \log(n-k)  - \frac{m (0.29-a_1)^2 }{c_1 }\right] + t &\leq& \log u. \nonumber
\eea
Taking 
\[
k \log \left( \frac{n-k}{u}\right) \geq (k+1)\cdot \log 2  + \log \left( \frac{n-k}{u}\right), 
\]
justifiable for $(n-k)/u $ suitably larger than $2$, the rate (\ref{eqn:rate2}) generously suffices to ensure these 2 conditions.

\subsection{More on noisy LASSO performance} \label{sup:Lasso}

\setcounter{figure}{0}
\begin{figure*}[!t]
	\centering
	  \epsfig{file={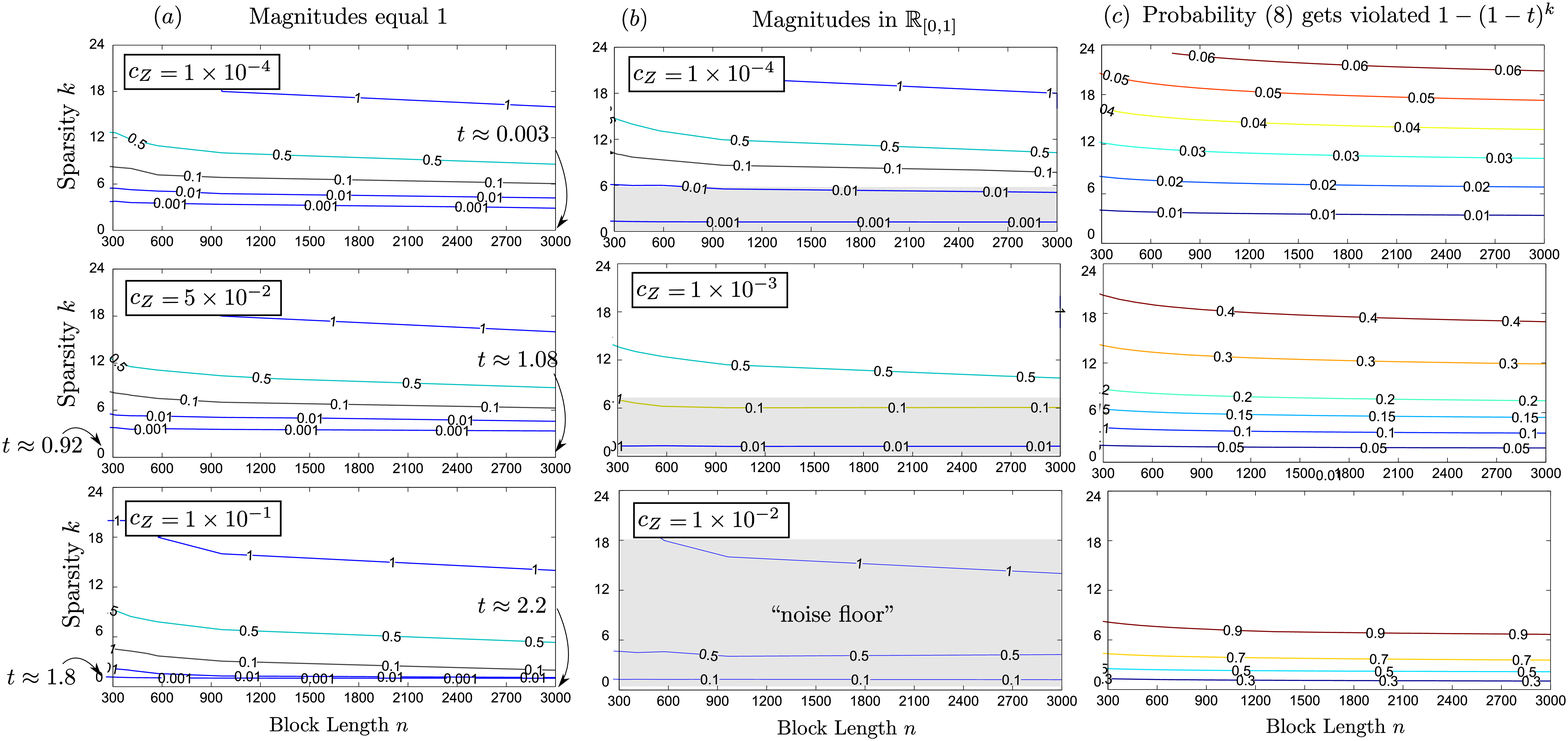},width=.9\linewidth}
		\caption{Empirical LASSO performance shown for $m=150$ for range of $k,n$ values. In $(a)$ the non-zero signal magnitudes $|\x_i|$ equal 1, and in $(b)$ they are in $\ZO$. In $(c)$ we plot a curve (expression) $1-(1-t)^k$ for $t = (3.4 + 2(1+a))\cdot  \noisesd\sqrt{2 \log n}$.}
		\label{fig:Supp}
\end{figure*}

The aim here is to provide more empirical evidence to support observations made in Figure \ref{fig:Lasso} for more block lengths. Here Figure \ref{fig:Supp} shows LASSO performance now for a wider range of $n$. 
We only consider $m=150$, and show various recovery failure rates displayed via contoured lines, for various sparsities $k$ and block lengths $n$.
Figures \ref{fig:Supp}$(a)$ and $(b)$ are companion to Figures \ref{fig:Lasso}$(a)$ and $(b)$, in that they respectively correspond to cases where the non-zero signal magnitudes $|\x_i|$ equal 1 (and $a=0$), and in $\ZO$ (and $a=1$).
That is, for $n=1000$, and $k=4$ and $\noisesd = 1 \times 10^{-4}$, we see the recovery failure is approximately $1\times 10^{-3}$ in both Figure \ref{fig:Supp}$(a)$ and Figure \ref{fig:Lasso}$(a)$.

As mentioned in Subsection \ref{ssect:Lasso} we observe good empirical match when adjusting the term $t = (a_1^{-1} + 2 a_3(1+a)) \cdot \noisesd\sqrt{2\log n}$ (on the RHS of (\ref{eqn:Cand2})) with $a_1 = 0.29$ and $a_3 = 1$. Figure \ref{fig:Supp} provides further support. 
In $(a)$ we show the values of the term $t$ for values $n=300$ and $n=3000$. 
Recall in this case when $t > 1$ condition (\ref{eqn:Cand2}) (and thus recovery) fails.
Observe when $\noisesd = 5 \times 10^{-2}$ the values of $t$ are very close to $1$, and for $\noisesd = 1 \times 10^{-1}$ they exceed $1$. 
This matches with our observation in Figure \ref{fig:Lasso}$(a)$ that $\noisesd = 5 \times 10^{-2}$ is the critical point, beyond which for large $\noisesd$ 
recovery fails catastrophically. 

In $(b)$ and $(c)$ we look at the other case where $|\x_i| \in \ZO$.
Here $(c)$ plots the probability $1-(1-t)^k$ that (\ref{eqn:Cand2}) fails. 
Again the contoured lines delineate a particular fixed value of $1-(1-t)^k$ for various $k,n$ values, whereby we set $t = 7.4 \cdot \noisesd\sqrt{2\log n}$ (recall we used $a=1$ here).
We observe how closely $(c)$ tracks the noise floor regions in $(b)$ (indicated by shading).
More specifically note $t$ really depends on $n$, and
the larger the probabilities $1-(1-t)^k$ get for various $k,n$ in Figure \ref{fig:Supp}$(c)$, this probability overwhelms the LASSO recovery rates in Figure \ref{fig:Supp}$(b)$. 
This matches with our previous observations in Figure \ref{fig:Lasso}$(b)$. 

\end{document}
